\documentclass[draftcls,onecolumn,12pt]{IEEEtran}
\usepackage{color}
\usepackage{verbatim}
\usepackage{amsfonts}
\usepackage{amssymb}
\usepackage{stfloats}
\usepackage{cite}
\usepackage{graphicx}
\usepackage{psfrag}
\usepackage{subfigure}
\usepackage{amsmath}
\usepackage{array}
\usepackage{epstopdf}
\usepackage{authblk}
\usepackage{graphicx} 
\usepackage{amsthm} 
\usepackage{lipsum}
\usepackage{verbatim} 
\usepackage{authblk}
\usepackage{mathtools}
\usepackage{cuted}
\usepackage[lined,boxed,ruled]{algorithm2e}
\usepackage{algpseudocode}
\usepackage{framed} 
\usepackage{subfigure}
\usepackage{soul}
\usepackage{bm}
\usepackage{setspace}

\newtheorem{theorem}{Theorem}

\newtheorem{definition}{Definition}

\newtheorem{proposition}{Proposition}

\newtheorem{example}{\bf Example}
\newtheorem{remark}{\bf Remark}
\def\phi{\varphi}

\def\l{\left}
\def\r{\right}
\def\({\left(}
\def\){\right)}

\setcounter{page}{1}



\def\bv{{\mathbf{v}}}
\def\bw{{\mathbf{w}}}

\def\b0{{\mathbf{0}}}




\def\cF{\mathcal{F}}

\def\cK{\mathcal{K}}

\def\cN{\mathcal{N}}
\def\cO{\mathcal{O}}

\def\cS{\mathcal{S}}





\newtheorem{Lemma}{Lemma}

\definecolor{LatestRevision}{rgb}{0.32, 0.18, 0.5}

\title{Resource Allocation for Multiuser Edge Inference with Batching and Early Exiting (Extended Version)}

\author{Zhiyan~Liu, \emph{Graduate Student Member, IEEE}, Qiao~Lan, \emph{Graduate Student Member, IEEE}, and Kaibin~Huang, \emph{Fellow, IEEE} 

\thanks{The work of K. Huang described in this paper was substantially supported by a fellowship award from the Research Grants Council of the Hong Kong Special Administrative Region, China (Project No. HKU RFS2122-7S04). The work was also supported by Guang-dong Basic and Applied Basic Research Foundation under Grant 2019B1515130003, Hong Kong Research Grants Council under Grants 17208319, and Shenzhen Science and Technology Program under Grant JCYJ20200109141414409.  (Corresponding author: Kaibin Huang).

Z. Liu, Q. Lan and K. Huang are with Department of Electrical and Electronic Engineering at The University of Hong Kong, Hong Kong. Contact: K. Huang  (Email: huangkb@eee.hku.hk).}}

\makeatletter
\newcommand{\removelatexerror}{\let\@latex@error\@gobble}
\makeatother

\IEEEoverridecommandlockouts

\begin{document}

\maketitle
\vspace{-10mm}
\begin{abstract}
    The deployment of inference services at the network edge, called edge inference, offloads computation-intensive inference tasks from mobile devices to edge servers, thereby enhancing the former’s capabilities and battery lives. In a multiuser system,  the joint allocation of communication-and-computation (C\textsuperscript{2}) resources (i.e., scheduling and bandwidth allocation) is made challenging by adopting efficient inference techniques, batching and early exiting, and further complicated by the heterogeneity in users’ requirements on accuracy and latency. Batching groups multiple tasks into a single batch for parallel processing to reduce time-consuming memory access and thereby boosts the throughput (i.e., completed task per second). On the other hand, early exiting allows a task to exit from a deep-neural network without traversing the whole network, thereby supporting a tradeoff between accuracy and latency. In this work, we study optimal C\textsuperscript{2} resource allocation with batching and early exiting, which is an NP-complete integer programming problem. 
    A set of efficient algorithms are designed under the criterion of maximum throughput by tackling the challenge.
    First, consider the case with batching but without early exiting. The target problem is solved optimally using a proposed best-shelf-packing algorithm that nests a threshold-based scheme, which selects users with the best channels and meeting the computation-time constraints, in a sequential search for the maximum batch size. Next, consider the general case with batching and early exiting. A low-complexity sub-optimal algorithm for C\textsuperscript{2} resource allocation is developed by modifying the preceding algorithm to exploit early exiting for latency reduction. On the other hand, the optimal approach is developed based on nesting a depth-first tree-search with intelligent online pruning into a sequential search for the maximum batch size. The key idea is to derive pruning criteria based on the simple greedy solution for the target problem without a bandwidth constraint and apply the result to designing an intelligent online pruning scheme. Experimental results demonstrate that both optimal and sub-optimal C\textsuperscript{2} resource allocation algorithms can leverage integrated batching and early exiting to double the inference throughput compared with conventional schemes. 
\end{abstract}

\begin{IEEEkeywords}
Edge inference, radio resource management, multiuser edge computing.
\end{IEEEkeywords}

\section{Introduction}
\label{sec: introduction}
Edge inference refers to offloading inference tasks from edge devices to edge servers colocated and connected to \emph{base stations} (BSs) for execution using large-scale \emph{artificial intelligence} (AI) models\cite{Letaief2022JSAC,Zhang2020CM,Niu2019Infocom,Deniz2020SPAWC,LQ2021arxiv,Shao2021arxiv,Zhou2020IoTJ,Chen2021IOTJ,Chen2020TWC,dualTSP}. Leveraging powerful computation resources at servers, the devices’ capabilities can be dramatically enhanced (e.g., computer vision capable of discerning hundreds of object classes) and battery lives lengthened. This makes edge inference an important platform for realizing a wide range of AI-powered applications in the \emph{sixth-generation} (6G) mobile networks ranging from industrial automation to extended reality to autonomous vehicles and robots. Batching and early exiting are two key techniques for efficient multiuser edge inference. On one hand, batching assembles the tasks offloaded by multiple users into a single batch for parallel execution to improve the throughput, referring to the number of executed tasks per unit time, by amortizing memory access time and enhancing utilization of computation resources\cite{nvidia2018,TensorRT}. On the other hand, when traversing a deep neural network, early exiting allows a task to exit at a flexible depth, termed an exit point. This supports variable accuracy and latency to accommodate users’ heterogenous \emph{Quality-of-Service} (QoS) requirements as well as more efficiently utilize computation resources \cite{pmlr-v70-bolukbasi17a,MSDNet}. The implementation of multiuser edge inference is confronted with a communication bottleneck due to limited bandwidth, channel hostility, and high-dimensionality of privacy-preserving feature maps uploaded by multiple users. To overcome the communication bottleneck and limitation of computation resources at edge servers, we present in this work an efficient integration approach for joint management of \emph{communication-and-computation} (C\textsuperscript{2}) resources with the objective of maximizing the throughput of edge inference with batching and early exiting. 

Edge inference is an emerging area but fast expanding in different directions including joint source-channel coding\cite{Zhang2020CM}, feature pruning\cite{Niu2019Infocom,Deniz2020SPAWC}, progressive transmission\cite{LQ2021arxiv}, and distributed data compression using the information-bottleneck approach\cite{Shao2021arxiv}. The techniques have demonstrated effectiveness in tackling the communication bottleneck. Another research focus in the area of multiuser edge inference is joint C\textsuperscript{2} resource management. It is worth mentioning that the topic has been extensively studied in the broader area of \emph{mobile edge computing} (MEC) (see, e.g., the survey in\cite{GX2020CM}). While abstract computation models are typically adopted in MEC research, multiuser edge inference can be considered as MEC particularized to the edge-cloud services adopting specific AI models such as deep neural networks\cite{Letaief2022JSAC}.  Some initial research has been conducted in this area\cite{Zhou2020IoTJ,Chen2021IOTJ}. In\cite{Zhou2020IoTJ}, inference accuracy and transmission cost are optimally controlled as the \emph{Markov decision process} (MDP) to balance their tradeoff. On the other hand, considering a multi-core CPU computation model, the model partitioning point for split inference and computation-resource allocation are jointly controlled to minimize the maximum end-to-end latency across a set of multiuser inference tasks\cite{Chen2021IOTJ}. In view of prior work, multiuser edge inference is still a largely uncharted area. In particular, there is a lack of a design framework for joint C\textsuperscript{2} resource management for multiuser edge inference, let alone considering advanced techniques such as batching and early exiting.

Batching and early exiting are two basic techniques in the area of deep learning. First, consider batching. Its throughput improvement arises from \emph{weight reuse}, namely reusing model weights loaded into a processor for multiple inference tasks\cite{TensorRT}. This reduces the frequency of time-consuming access to memory storing a large-scale AI model and amortizes the memory access time over batched tasks, thereby boosting the throughput. Nevertheless, too large a batch size can result in excessive queueing time for individual users, which leads to throughput reduction. This requires optimization of the tradeoff by developing intelligent batching schemes such as  adapting the batch size to the estimated task-arrival intensity \cite{AliSC20} and dynamic management of batches\cite{LazyBatch,E2bird}. Next, consider early exiting that realizes a tradeoff balance between accuracy and computation latency based on a customized neural network architecture, called a \emph{backbone network}\cite{pmlr-v70-bolukbasi17a,MSDNet}. This architecture is a conventional deep neural network added with multiple low-complexity intermediate classifiers as candidate exit points. A task requiring a lower accuracy traverses a smaller number of network layers before exiting, namely being diverted to an intermediate classifier for immediate inference, and vice versa. Recently, early exiting has been applied to edge inference\cite{Chen2020TWC,dualTSP }. Specifically, the local/server model partitioning for split inference and exit points can be jointly optimized to maximize the inference accuracy under a latency constraint\cite{Chen2020TWC}; layer skipping and early exiting are combined to facilitate edge inference with hard resource constraints\cite{dualTSP}.
{In this work, we combine batching and early exiting in multiuser edge inference to boost its throughput. The advantages are twofold. First, the \emph{early-feedback effect}, i.e., instantaneous downloading of inference results upon early exits to intended users so as to shorten their task latency, is further complemented by batched parallel processing. Second, early exits release computation resources to accelerate other ongoing tasks in that batch as reflected in that a progressively reducing batch size over blocks of model layers accelerates computation over the blocks, termed the \emph{effect of shrinking batch size}.} However, the complex edge-inference process renders joint C\textsuperscript{2} resource management challenging as discussed shortly.

We consider an edge-inference system where an edge server responds to requests from multiple users to access a common AI model for executing their inference tasks under heterogeneous QoS requirements. In this system, C\textsuperscript{2} resource allocation involves scheduling a subset of users to grant their requests  and allocate bandwidth for uploading features. The optimization of control policies under the criterion of maximum system throughput is made challenging by three facts. First, batching and early exiting are interwound in that the latter results in a shrinking batch size over sequential layer blocks, causing heterogeneous computation time for different tasks. Second, besides communication coupling due to spectrum sharing, the scheduled users’ task executions are also inter-dependent in an intricate way. Specifically, the users’ accuracy requirements are translated into different numbers of layer blocks to traverse and the computation time of each block depends on the random number of tasks passing through it. Third, due to the constraints on end-to-end latency, communication and computation are coupled such that channel states and users’ QoS requirements need be jointly considered in batching/scheduling and bandwidth allocation. Mathematically, the optimization of C\textsuperscript{2} resource allocation involves an integer programming problem that is NP-complete.

In this work, we tackle the above challenges to design a framework of C\textsuperscript{2} resource allocation for efficient multiuser edge inference with batching and early exiting. {The design approach is characterized by the sequential treatment of two cases with increasing complexity, namely one using only batching and the other both batching and early exiting. This allows the algorithm and insight from solving the problem of optimal resource management in the former to be used as a stepping stone for efficiently computing the policies in the latter.} The key contributions and findings are summarized as follows.
\begin{itemize}
    \item \emph{Batched edge inference:} To begin with, we consider a simple case with batching but without early exiting, i.e., the full-network inference case. The problem of optimal C\textsuperscript{2} resource allocation is solved by transformation to allow the application of a best-shelf-packing algorithm. Specifically, consider the problem conditioned on a batch size. The optimal scheduling scheme is shown to select the users with the best channels from those whose latency budgets suffice for the expected batch computation time. Then the efficient best-shelf-packing algorithm nests the scheduling scheme in a sequential search of the maximum feasible batch size, yielding the maximum throughput.
    \item \emph{Edge inference with batching and early exiting:} 	First, a low-complexity sub-optimal algorithm for C\textsuperscript{2} resource allocation is proposed via omitting the batch size variation during inference and modifying its counterpart in the preceding case without early exiting. The resultant algorithm exploits the effect of early feedback but not the effect of shrinking batch size. Next, the optimal problem is solved by constructing a search tree for a given batch size and designing an intelligent depth-first search algorithm. To this end, a relaxed version of the problem without the bandwidth constraint is solved. The resultant greedy scheduling scheme greedily selects users with the lowest accuracy requirements. Building on the scheme, an intelligent algorithm for online tree pruning is proposed for the original problem to dramatically reduce the search complexity. The overall throughput maximization algorithm nests the efficient tree-search in a sequential search over the optimal batch size. 
    \item \emph{Performance: } Experimental results demonstrate more than 100\% throughput improvement achieved by the proposed design as opposed to the cases without dynamic batching or early exiting. Moreover, the proposed efficient tree-search algorithm achieves more than 10-fold complexity reduction with respect to searching without pruning. 
\end{itemize}

The rest of this paper is organized as follows. Section~\ref{sec: models_metrics} introduces the models and metrics. Section~\ref{sec: problem_formulation} formulates the optimization problems for multiuser edge inference. In Section~\ref{sec: MEAIwoEE}, we present the optimal algorithm for the full-network inference case, while in Section~\ref{sec: MEAIwEE}, we address the early exiting case and propose two algorithms. Experimental results are given in Section~\ref{sec: experiments}, followed by the conclusions in Section~\ref{sec: conclusions}. 

\section{Models and Metrics}
\label{sec: models_metrics}

Consider a single-cell system where an edge server provides the edge-inference service to multiple edge devices, as illustrated in Fig.~\ref{fig_system}.  These on-device applications are assumed to share a common trained prediction  model at the server, e.g., a large-scale classifier capable of discerning hundreds of object classes. Upon the arrival of  an inference  task, a device sends to the server a service  request with QoS requirements on   latency and accuracy. The request  is either granted  or denied due to the lack of necessary communication or computation resources. In the former case, based on  split inference, the scheduled device uploads a feature vector extracted from a multimedia file using  a local model, and then downloads the inference result when ready. The following protocol for edge inference is assumed. As shown in Fig.~\ref{fig_timeslot}, time is divided into \emph{epochs}, each of which is further divided into a communication and a computation slot. The slots have equal duration denoted as $T$. Each epoch has its computation slot overlapping  with the communication slot of the following  epoch to maximize the efficiency of resource utilization (see Fig.~\ref{fig_timeslot}). This setting and the alternative are discussed in the following remark.
\begin{remark}[\emph{Arrangements of communication and computation slots}] {\emph{The current setting of equal durations for communication and computation slots targets the scenario where the latency of communication and computation is comparable, necessitating their integrated design. On the other hand, the current design framework can also accommodate unequal durations for the two slot types by straightforward modifications. Specifically, given the durations of communication and computation slots, the current analysis and algorithm designs can be extended by modifying  each task's bandwidth requirement and latency budget accordingly, which does not change the form of the resource management problem and its solution approach.}}
\end{remark}

Consider an arbitrary epoch. At its beginning, the server receives service requests from $K$ active devices with their indices denoted  as  $\cK = \{1,2,\ldots,K\}$. The scheduling criterion  of the server is towards throughput maximization, namely maximizing the number of completed tasks in this epoch under their QoS requirements. At the end of the communication   slot used for uploading by scheduled devices, the edge server assembles   the received feature vectors for different tasks into a single batch and starts inference on the batch. As soon as it is computed, the inference result (i.e., a label) is fed back to the intended device. Models and metrics are described as follows. 

\begin{figure*}[!t]
  \centering
  \includegraphics[width=0.85\textwidth]{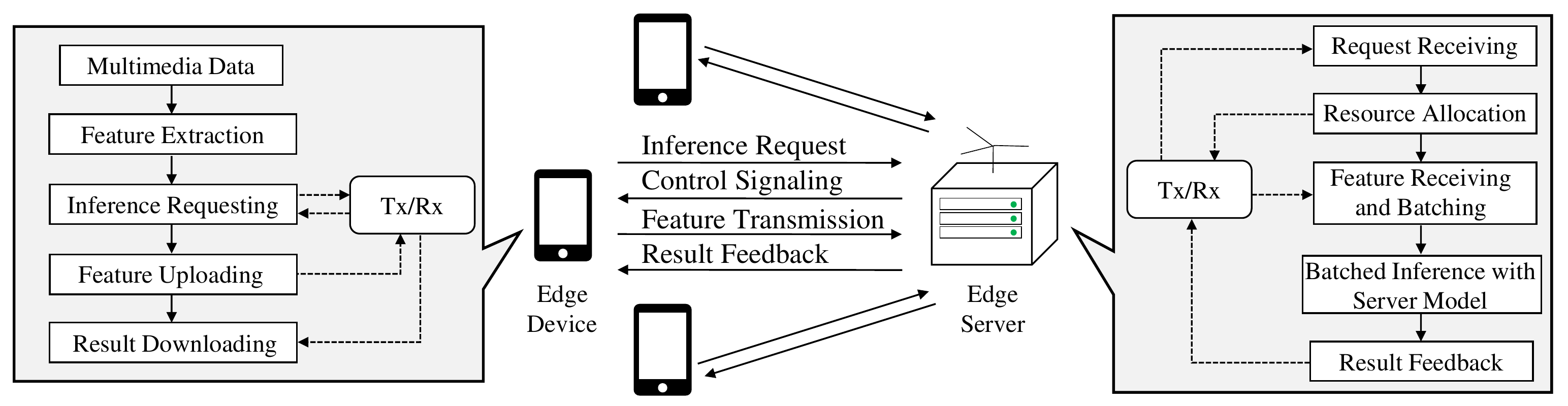}
  \caption{The edge inference system and operations. }\label{fig_system}
  \centering
\end{figure*}
\begin{figure*}[!t]
  \centering
  \includegraphics[width=0.76\textwidth]{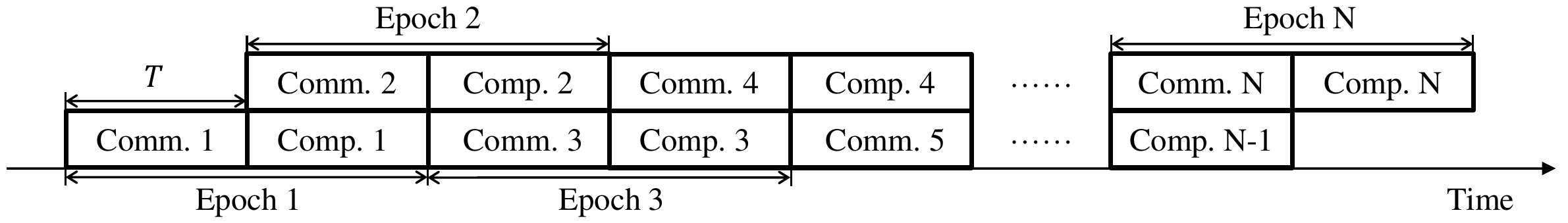}
  \caption{Timeline for edge inference. The n-th communication slot is denoted as ``Comm. n'' while the n-th computation slot is denoted as ``Comp. n''. }\label{fig_timeslot}
  \centering
\end{figure*}

\subsection{Communication Model}
As shown in Fig.~\ref{fig_system}, the communication between devices and the server involves  feature uploading, result downloading, and service requesting. The last two have negligible data sizes as opposed to feature uploading, which is hence the focus of the communication model. The system adopts \emph{orthogonal frequency-division multiple access} (OFDMA) for allocating a broadband spectrum to scheduled devices. The number of sub-carriers is assumed to be sufficiently large (e.g., several thousands  in 5G) such that bandwidth partitioning can be approximated as being continuous. Then let  $B$ denote the total bandwidth and $\rho_k$ with $0\leq\rho_k\leq1$ the fraction  of bandwidth allocated to device  $k$. Consider an arbitrary scheduled device, say the $k$-th one. The associated  channel is assumed frequency non-selective and thus the channel gain can be represented by a scalar  $h_k$, which remains constant within one epoch and is perfectly estimated by the server. The transmission power is represented  by $p_k$, which is also known by the server. Then the uplink communication rate,  $r_k$, can be written as 
\begin{equation}
    r_k = \rho_k B \log_2\l(1+\frac{p_k h_k^2}{N_0}\r),
\end{equation}
where $N_0$ is the white Gaussian channel noise power.  Let $\ell_k$ denote the size of the feature vector in bits. Since the feature vector must be uploaded within one communication slot (mathematically,  $r_k T \geq \ell_k$),  this sets the minimum fraction of allocated bandwidth, denoted $\rho_{\min, k}$, as
\begin{equation}
    \rho_{\min, k} \triangleq \frac{\ell_k}{{T B \log_2(1+p_k h_k^2/N_0)}}, 
\end{equation}
such that $\rho_k \geq \rho_{\min, k}$, if device $k$ is scheduled.

\subsection{Inference Model}

The edge server operates  a trained CNN model, which comprises multiple convolutional and fully-connected layers, to predict a label for each feature vector. The  batching and early exiting techniques are employed for the purposes of high efficiency and accommodating heterogeneous QoS requirements, respectively. Their models are described as follows. 

\subsubsection{Batching Model}
The server's GPUs operate in the batch mode to leverage their parallel computing capability and amortize the memory access time. At the start of a  computation slot, the edge server assembles  all received feature vectors into a single batch, which is a higher-dimensional tensor\cite{TensorRT}. The batch is then fed as a whole into the model for parallel inference. The batch size, denoted as $n_{\mathsf{b}}$, is a variable with support $\{1,\ldots,K\}$.  As found in existing experiments, there exists a deterministic relationship between the batch execution time, denoted as $t_\mathsf{cp}$, and the batch size for a given  specific  hardware platform (see e.g., \cite{AliSC20,LazyBatch,Inoue2021}).  Such a relation is represented by a mapping $f$: $t_{\sf cp}=f(n_{\mathsf{b}})$, which is a monotonically increasing function. For example, for an NVIDIA Tesla P4 GPU, the function is observed to be approximately \emph{linear} for different CNN architectures (i.e., GoogLeNet, ResNet-50, VGGNet-19) \cite{nvidia2018}, usually with a large constant term and a moderate slope. This results in that the throughput as measured by the number of executed tasks per second increases  as the batch size grows.

\subsubsection{Early Exiting Model}
Following the common design in the literature as illustrated in Fig.~\ref{fig_diagram_earlyexit}, the CNN model at the edge server is divided into $D$ sequential blocks of layers, each followed by a classifier that can predict the label using the extracted features\cite{pmlr-v70-bolukbasi17a,MSDNet}. A classifier by itself is a shallow  neural network comprising  a small number of layers as compared with  that in a block and  hence has much lower complexity. Based on this architecture, the server executes only a small number of  blocks before exiting via an intermediate  classifier for  a task requiring  relatively low accuracy/latency, and vice versa. Thereby the model supports a tradeoff between inference accuracy and latency. Let $1\leq d_k \leq D$ denote the exit point of task $k$, which is defined as the  number of executed blocks before exiting. For  ease of notation, we divide the model execution time into block-wise execution time. To this end,  let $t_d$ with $d=1,\ldots, D$ denote the execution time of the $d$-th block and $n_d=1,\ldots,D$ denote the remaining batch size upon entering the $d$-th block with $n_1=n_{\mathsf{b}}$. Then the sequence $n_1, n_2, \cdots, n_D$ is non-increasing. Let $f_d$ represent the mapping from the batch size to the block-wise execution time: $t_d=f_d(n_d)$. Since $n_d$ represents the computation load, the function is a monotonically increasing function. 

\subsection{QoS Metrics}
\subsubsection{End-to-end Latency}

The end-to-end latency of an inference task is measured from the moment the device sends the task request to the server until the downloading  of the inference result from the server. Consider an arbitrary task say, task $k$. It is considered successful only if the end-to-end latency does not exceed its constraint $\tau_k$.  The downloading latency is considered negligible due to the small sizes of inference results (i.e., labels). Consequently,  the end-to-end latency for task $k$ consists of three parts: 1) the waiting time from the task arrival to the beginning of an epoch, denoted by $t_{\mathsf{wt},k}$, 2) the time for feature uploading fixed as $T$ (see Fig.~\ref{fig_timeslot}), $t_{\mathsf{ul}, k}$, and 3) the computation time at the server, $t_{\mathsf{cp},k}$. At the beginning of a particular epoch, $t_{\mathsf{wt},k}$ is given. In the full-network inference case, the computation time $t_{\mathsf{cp},k}$ is the same for all scheduled tasks in the epoch, i.e., $t_{{\sf cp},k}=f(n_{\mathsf{b}})$. In the early exiting case, the inference result for task $k$ is ready as soon as the first $d_k$ blocks are computed, and thus $t_{\mathsf{cp},k}=\sum_{d=1}^{d_k}t_d = \sum_{d=1}^{d_k} f_d(n_d)$. Note that the computation time $t_{\mathsf{cp},k}$ relies on the exit point $d_k$, and thus varies over different tasks even in the same epoch. Tasks that early exit  enjoy lower end-to-end latency since the model is partially executed. Furthermore, as tasks early exit from the batch, the remaining tasks are also accelerated due to the release of computation  resource by exited tasks. 

\subsubsection{Inference Accuracy}
\begin{figure*}[!t]
  \centering
  \includegraphics[width=0.75\textwidth]{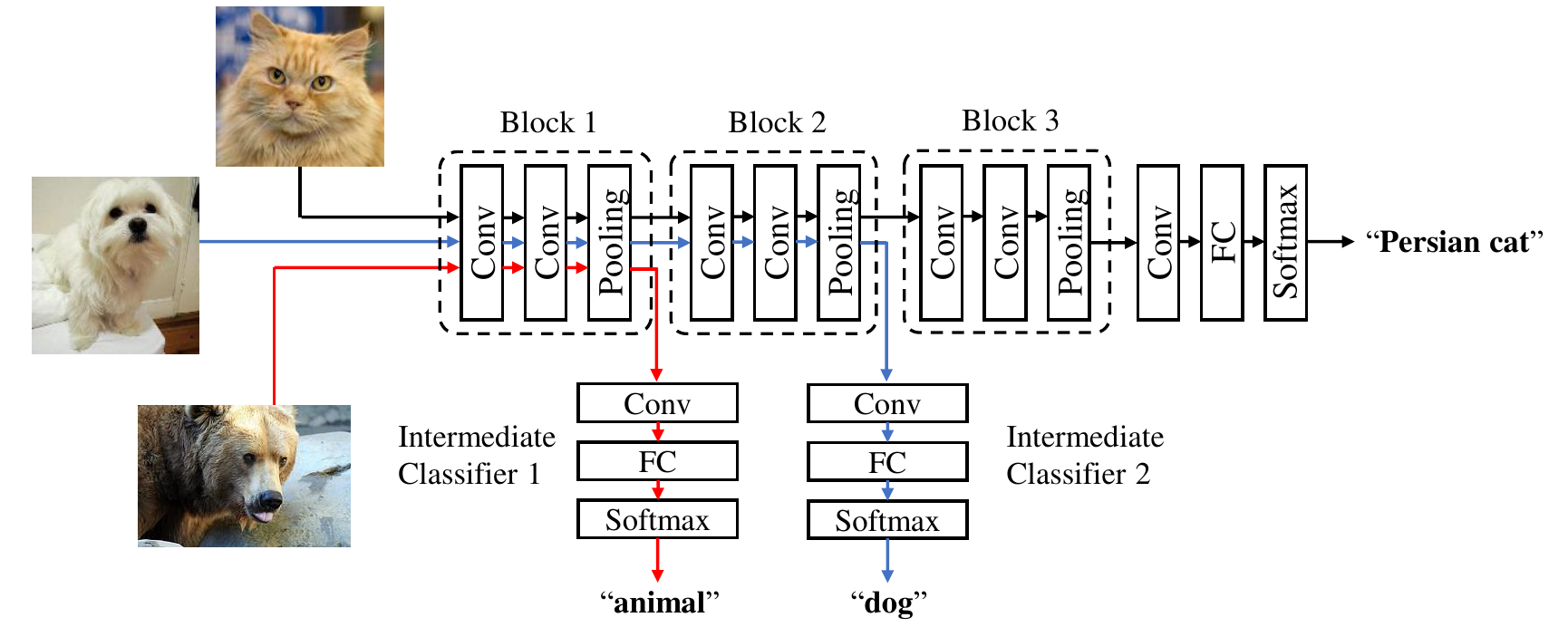}
  \caption{A three-block deep neural network with early exits. }\label{fig_diagram_earlyexit}
  \centering
\end{figure*}

In the considered context of deep learning, the classification accuracy of an inference model refers to the percentage of correctly classified samples within the entire testing dataset. As a common practice in the machine learning literature, this empirical accuracy approximates the probability that an uploaded task sample is correctly classified \cite{goodfellow2016deep}. Full-network inference provides the same accuracy to all tasks. In the early exiting case, each task is specified with the target accuracy $a_k$ with $0<a_k<1$. Prior to the deployment of the edge-inference system, the server profiles the deterministic monotonic increasing relationship between the empirical inference accuracy and the number of executed blocks (see the examples of GoogLeNet, ResNet-50 and MSDNet in \cite{pmlr-v70-bolukbasi17a,MSDNet}). Using the relation, the server can determine the minimum number of blocks for a task, say task $k$, that need be executed to meet the target accuracy $a_k$. In other words, the exit points $\{d_k\}$ are known. 

\section{Problem Formulation}
\label{sec: problem_formulation}
The objective of optimal resource allocation is to maximize the number of completed multiuser tasks within each epoch under their heterogeneous QoS constraints. In this section, the optimal designs are  formulated as two optimization problems for the full-network inference and early exiting case respectively. Let $\cS \subseteq \cK$ denote the subset of selected tasks in one epoch, and $\vert \cS  \vert$ its cardinality. Then the design objective is to maximize $|\cS|$. By allocating to each selected task exactly the minimum bandwidth it requires, the bandwidth constraint is  $\sum_{k\in \cS} \rho_{\min, k} \leq 1$. First, consider the full-network inference case without early exiting. On one hand, if task $k$ is selected, its end-to-end latency constraint must be met, i.e., $t_{\mathsf{wt},k}+t_{\mathsf{ul},k}+t_{\mathsf{cp},k}\leq \tau_k$. On the other hand, all inference computation must be finished within the duration of the computation slot, which gives $t_{\mathsf{cp},k}\leq T$ for all $k\in\cS$. Consequently, the constraints on end-to-end latency and inference latency can be merged into $f(|\cS|)\leq \tilde{\tau}_k$ for all $k \in \cS$, where $\tilde{\tau}_k\triangleq \min\{\tau_k-t_{\mathsf{wt},k}-T, T\}$ is the required computation latency for task $k$, and referred to as the \emph{latency budget} in the sequel. Under the above constraints, the resource allocation problem for the full-network inference case is formulated as follows: 
\begin{equation*}\text{(P1)}\quad\quad\quad 
\begin{aligned}
\max\limits_{\cS\subseteq \cK}\quad & |\cS|  \\
   \mathrm{s.t. }\quad & \sum_{k\in \cS} \rho_{\min, k} \leq 1, \\
                       & f(|\cS|)\leq \tilde{\tau}_k, \quad \forall k \in \cS.
\end{aligned}
\end{equation*}
{Next, consider the early-exiting case where accuracy requirements can be heterogeneous. Diversified accuracy requirements are translated into task-specific constraints on computation latency  as elaborated shortly.} The batch size at block $d$ is the number of selected tasks whose exit points are after block $d$, i.e., $n_d=\sum_{k\in\cS}\mathbb{I} {\left(d_k\geq d\right)}$, where $\mathbb{I}{(\cdot)}$ represents the indicator function. As a result, the latency constraints are recast to formulate the following optimization problem for the early exiting case: 
\begin{equation*}\text{(P2)}\quad\quad\quad 
\begin{aligned}
\max\limits_{\cS\subseteq \cK}\quad & |\cS|  \\
   \mathrm{s.t. }\quad & \sum_{k\in \cS} \rho_{\min, k} \leq 1, \\
                       & \sum_{d=1}^{d_k} f_d(n_d)\leq \tilde{\tau}_k, \quad \forall k \in \cS.
\end{aligned}
\end{equation*}

The combinatorial nature of the resource allocation problems renders the optimal policy design challenging. 
In particular, the following result suggests that no polynomial-time algorithm exists for Problem~P2 unless P=NP. 
\begin{proposition}
\emph{Problem~P2 is NP-complete}. 
\end{proposition}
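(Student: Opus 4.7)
The plan is to prove NP-completeness in two stages: first establishing membership in NP, then showing NP-hardness by a polynomial-time reduction from a classical NP-complete problem. For membership in NP, I observe that given a candidate subset $\cS\subseteq\cK$, one computes $n_d=\sum_{k\in\cS}\mathbb{I}(d_k\geq d)$ for $d=1,\ldots,D$ in a single $O(|\cS|D)$ pass, and then checks both the bandwidth constraint $\sum_{k\in\cS}\rho_{\min,k}\leq 1$ and each per-user latency constraint $\sum_{d=1}^{d_k}f_d(n_d)\leq\tilde{\tau}_k$. All of these checks are polynomial in the input size, so Problem~P2 lies in NP.

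For NP-hardness, I would reduce from PARTITION (equivalently SUBSET SUM), which is the natural source because the bandwidth constraint $\sum_{k\in\cS}\rho_{\min,k}\leq 1$ is itself knapsack-like; the excerpt notes that dropping this constraint leaves a polynomial-time greedy problem, so any reduction must genuinely exploit the bandwidth inequality in combination with the exit-point coupling. Given a PARTITION instance $a_1,\ldots,a_n$ with $\sum_i a_i=2B$, the plan is to instantiate $n$ ``item users'' with $\rho_{\min,k}=a_k/(2B)$ so that the weights are encoded in bandwidth, and then to adjoin a small number of auxiliary ``anchor'' users with tailored exit points $d_k$ and tight latency budgets $\tilde{\tau}_k$. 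The anchors' latency constraints, coupled through the $n_d$ counts and hence through the identities of the selected item users, will be designed to be \emph{simultaneously} feasible only when $\sum_{k\in\cS\cap[n]}a_k=B$. Deciding whether P2 admits a schedule of a prescribed cardinality $|\cS|\geq m$ then answers the PARTITION instance, and the reduction is clearly computable in polynomial time.

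The main obstacle is the anchor-gadget design. Because every $f_d$ is required to be monotonically increasing and the latency inequalities have a fixed one-sided form, enforcing an \emph{equality} on the item-user bandwidth sum requires a pair of anchor users with complementary roles, each ruling out one side of the balance. Concretely, I plan to take $f_d$ linear, $f_d(n)=c_d+s_d n$, and to tune $(c_d,s_d)$ together with the anchors' latency budgets so that one anchor becomes infeasible whenever the selected item-user bandwidth exceeds $1/2$ while the other becomes infeasible whenever it falls below $1/2$; only perfect balance makes both anchors (and hence the target cardinality) feasible. A concluding bookkeeping step verifies that all constructed quantities have bit-length polynomial in the input of PARTITION, which together with the NP membership above completes the proof of NP-completeness.
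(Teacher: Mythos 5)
Your NP-membership check is fine (the paper leaves it implicit), but the hardness gadget has a genuine flaw. In Problem~P2 the latency constraints see the selected set $\cS$ only through the block-wise counts $n_d=\sum_{k\in\cS}\mathbb{I}\left(d_k\geq d\right)$; they are completely independent of the bandwidth requirements $\rho_{\min,k}$ of the selected users. Hence no choice of anchor exit points, anchor latency budgets, or (global, shared) functions $f_d$ can make an anchor ``infeasible whenever the selected item-user bandwidth falls below $1/2$'': if the PARTITION weights $a_k$ live only in $\rho_{\min,k}$ and the item users share a few common exit points, then for any fixed number of selected items the latency side of every constraint is identical no matter \emph{which} items are chosen, and the bandwidth constraint is one-sided, so swapping a selected item for a lighter one can never destroy feasibility. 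For your instances the greedy ``smallest bandwidth first'' rule is therefore always optimal, the decision problem is solvable in polynomial time, and no equality $\sum_{k\in\cS\cap[n]}a_k=B$ can be enforced. The missing idea is a mechanism that gives each item an item-specific additive weight \emph{inside} the latency dimension; since a selected user $k$ affects user $j$'s latency only through $\sum_{d\leq\min(d_k,d_j)}$ of the increments of the $f_d$'s, this forces item-specific (distinct) exit points, i.e. $D$ growing with the number of items, and cannot be delegated to a constant number of anchors.

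This is exactly the device in the paper's proof, which reduces from the cardinality-maximizing two-dimensional knapsack problem (NP-complete by \cite{KORTE1981}) rather than from PARTITION: it sets $D=K^\dagger$, $d_k=k$, $\tilde{\tau}_k=C_2$ for all $k$, and chooses the telescoping linear model $f_1(n_1)=b_1 n_1$, $f_d(n_d)=(b_d-b_{d-1})n_d$, so that $\sum_{d}f_d(n_d)=\sum_{k\in\cS}b_k$ and the latency constraint becomes a second additive knapsack constraint, while the weight constraint maps to bandwidth via $\rho_{\min,k}=a_k/C_1$. Your PARTITION route could be repaired along the same lines (encode $a_k$ in bandwidth and a complementary per-item weight in the latency dimension via distinct exit points and telescoping $f_d$'s, then use the cardinality target to force both inequalities to bind), but as written---weights only in bandwidth, enforcement via a small set of anchors---the reduction cannot work, so the NP-hardness half of your proof does not go through.
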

\begin{proof}
The two-dimensional knapsack problem with the target of maximizing cardinality aims to contain a maximum number of items into a single knapsack with capacity limits in two dimensions. Specifically, let $\cK^\dagger=\{1,\ldots,K^\dagger\}$ be the index set of all items. The weight and volume of item $k$ are denoted by $a_k$ and $b_k$, respectively. The problem of selecting the maximum number of items from $\cK^\dagger$ while not exceeding the weight and volume capacities of the knapsack, denoted by $C_1$ and $C_2$ respectively, is mathematically formulated as 
\begin{equation*}
\begin{aligned}
\max & \sum_{k=1}^{K^\dagger} x_k  \\
   \mathrm{s.t. }\quad & \sum_{k=1}^{K^\dagger} a_k x_k\leq C_1, \\
                       & \sum_{k=1}^{K^\dagger} b_k x_k \leq C_2, \\
                       & x_k\in\{0,1\},
\end{aligned}
\end{equation*}
where $x_k=1$ if the item with index $k$ is selected and vice versa. The items of the above problem can be mapped to the tasks of Problem~P2 by letting $\cK=\cK^\dagger$ and $\cS = \{k|x_k=1\}$, and thus the target of the above problem is equivalent to that of Problem~P2. The weight constraint is also equivalent to the bandwidth constraint of Problem~P2, as the item weights can be mapped to the bandwidth requirements of Problem P2 by letting $\rho_{\min,k}=a_k/C_1$. Next, we show that the volume constraint is a special case of the latency constraint of Problem~P2. To this end, we construct a task set for Problem~P2 with $d_k=k$ and $\tilde{\tau}_k=C_2$ for $k=1,\ldots,K^\dagger$ and $D=K^\dagger$. Hence, we have $n_d = \sum_{k=d}^{K^\dagger} x_k$. Further, noting that $f_d(\cdot)\geq 0$, the latency constraint is equivalent to $\sum_{d=1}^{k_{\max}} f_d(n_d)\leq C_2$, where $k_{\max} \triangleq \max\limits_{k} \{k|x_k=1\}$. Consider a particular linear batch computation time model where $f_1(n_1) = b_1 n_1$ and $f_d(n_d)=(b_d-b_{d-1})n_d$ for $d=2,\ldots,D$. Without loss of generality, we assume that $b_1\leq b_2 \leq \ldots\leq b_{K^\dagger}$ such that $f_d(n_d)$ is monotonically increasing. The latency constraint then becomes
\begin{equation*}
    \sum_{d=1}^{k_{\max}} b_d x_d = \sum_{k\in\cS} b_k \leq C_2,
\end{equation*}
which is equivalent to the volume constraint. Therefore, we  conclude that the cardinality-maximizing two-dimensional knapsack problem, which according to \cite{KORTE1981} is NP-complete, is a special case of Problem~P2. Hence, Problem~P2 is NP-complete. This completes the proof. 
\end{proof}

To overcome these challenges, in the following sections, we first propose a low-complexity optimal algorithm for solving Problem P1, and then modify this algorithm to develop a sub-optimal algorithm for solving Problem P2. Furthermore, via analyzing the solution for Problem P2 in the special large-bandwidth case, we exploit the insight to develop an efficient tree-search algorithm for finding the optimal solution.

\section{Multiuser Edge Inference without Early Exiting}
\label{sec: MEAIwoEE}

In this section, we consider the case where users have similar QoS requirements and thus early exiting is not necessary, i.e., the full-network inference case. For this case, the optimal resource allocation for multiuser edge inference is designed by solving Problem~P1. First, we transform the problem into a series of feasibility problems. Then, a solution  algorithm is designed  by solving these feasibility problems sequentially and its optimality is proved. 

To solve Problem~P1, we first analyze the latency constraint, i.e., $f(\vert\cS\vert)\leq \tilde{\tau}_k$ for any $k \in \cS$. The batch computation latency $f(\vert\cS\vert)$ depends on the cardinality but not particular elements of $\cS$. Hence, given the cardinality fixed as  $\vert \cS \vert=n$, the latency constraint can be simplified as $f(n)\leq \tilde{\tau}_k$. This implies a simple threshold structure on the latency feasibility that, any task whose latency budget $\tilde{\tau}_k$ satisfies $\tilde{\tau}_k \geq f(n)$ is feasible conditioned on $n$ tasks batched. This motivates us to study the following feasibility problem: \emph{Does there exist $n$-task subsets that satisfy the bandwidth and latency constraints?} The problem of finding such subsets for $n$ can be mathematically written as
\begin{equation*}\text{(P3)}\quad\quad\quad
\begin{aligned}
\mathrm{find} \quad & {\cS\subseteq \cK}   \\
   \mathrm{s.t. }\quad & |\cS| = n, \\
                       & \sum_{k\in \cS} \rho_{\min, k} \leq 1, \\
                       & f(n)\leq \tilde{\tau}_k, \quad \forall k \in \cS. 
\end{aligned}
\end{equation*}
The threshold $f(n)$ divides the set of all requested  tasks $\cK$ into $\cF$ and the complement $\cK\setminus\cF$, where $\cF$, referred to as the \emph{feasible subset} of $\cK$, consists  of all tasks whose latency budget is not lower than $f(n)$. Mathematically, 
\begin{equation}\label{eqn:no_ee_feasible_subset}
    \cF =\{ k|k\in\cK, f(n)\leq\tilde{\tau}_k \}.
\end{equation} 
The collection of $n$-element task subsets that do not violate the latency constraint is then simply all $n$-element subsets of $\cF$. Consequently, if the solution $\cS$ exists, it must be an $n$-element subset of $\cF$. If $|\cF|< n$, obviously Problem~P3 has no solution. On the other hand, if $|\cF|\geq n$, to solve Problem~P3 only requires  finding an $n$-element subset of $\cF$ that satisfies the bandwidth requirement, i.e., $\cS\subseteq\cF$ such that $|\cS| = n$ and $\sum_{k\in \cS} \rho_{\min, k} \leq 1$. To this end, we sort the tasks in $\cF$ in  ascending order according to the bandwidth requirement $\rho_{\min, k}$ and let $\cS$ contain the first $n$ tasks since they require  the smallest bandwidths. If the sum bandwidth requirement of $\cS$ is smaller than or equal to $1$, $\cS$ satisfies  both the bandwidth and latency constraints and hence constitutes a solution for  Problem~P3. Otherwise, we can conclude that  Problem~P3 is infeasible.

We have so far solved Problem~P3 for any given $n$. To solve Problem~P1, we {sequentially} solve P3 for $n=1, 2, \cdots$, until we find the smallest $n^\star$ such that $(n^\star+1)$ makes Problem~P3 infeasible.
 Then the corresponding solution for P3 with $n^\star$ also solves Problem~P1. We summarize the algorithm in Algorithm~\ref{algorithm: opt_p1} with optimality proved in the sequel. 

\begin{algorithm}[t]
\caption{Best-shelf-packing Algorithm for Solving Problem~P1}
\label{algorithm: opt_p1}
\textbf{Input:} The task index set $\cK=\{1,2,\ldots,K\}$, the bandwidth requirements $\{\rho_{\min, k}\}_{k=1}^{K}$, the latency constraints $\{\tilde{\tau}_k\}_{k=1}^{K}$;\\
\textbf{Initialize:} $\cS^{(0)}=\emptyset$;\\
Sort $\cK$ in ascending order according to $\rho_{\min, k}$; \\
\textbf{for} $n=1,2,\cdots,K$ \textbf{do}\\
    \begin{enumerate}
        \item[] $\cF \leftarrow \{ k|k\in\cK, f(n)\leq\tilde{\tau}_k \}$;\\
                \textbf{if} $|\cF|<n$ \textbf{then return} $\cS^{(n-1)}$; \\
                \textbf{else}\\
                \begin{itemize}
                    \item[]  Let $\cS'$ contain $n$ tasks of $\cF$ with minimum bandwidth requirements; 
                    \item[] \textbf{if} $\sum_{k\in \cS'} \rho_{\min, k} \leq 1$ \textbf{then}\\
                    \begin{itemize}
                        \item[] \emph{Problem~P3 is feasible given $n$: }$\cS^{(n)}\leftarrow\cS'$;
                    \end{itemize}
                    \textbf{else return} $\cS^{(n-1)}$
                \end{itemize}
    \end{enumerate}
\textbf{end for} \\    
\textbf{return} $\cS^{(K)}$\\
\end{algorithm}

\begin{proposition}
\emph{Algorithm~\ref{algorithm: opt_p1} is optimal for Problem~P1.}
\end{proposition}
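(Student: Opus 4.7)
The plan is to prove optimality by reducing Problem~P1 to a sequential scan of Problem~P3 and showing two facts: (i) for any fixed $n$, the threshold-plus-greedy construction inside the loop finds a solution to Problem~P3 whenever one exists, and (ii) the feasibility of Problem~P3 is monotone in $n$, so the first $n$ at which the loop fails is exactly $n^\star+1$, where $n^\star$ is the maximum feasible cardinality. Once these are established, the final subset $\cS^{(n^\star)}$ returned by the algorithm satisfies both constraints of Problem~P1 with cardinality equal to the largest achievable value, hence is optimal.

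For (i), I would simply re-state the argument already given just above the algorithm in cleaner form. For fixed $n$, every feasible subset of Problem~P3 must be contained in $\cF = \{k\in\cK : f(n)\le\tilde{\tau}_k\}$, because the latency constraint $f(n)\le\tilde{\tau}_k$ depends on $|\cS|=n$ alone. Therefore, the question reduces to: does $\cF$ contain $n$ elements whose minimum-bandwidth sum does not exceed $1$? Sorting $\cF$ ascendingly by $\rho_{\min,k}$ and taking the first $n$ elements minimizes $\sum_{k\in\cS}\rho_{\min,k}$ over all $n$-subsets of $\cF$; hence if this minimum exceeds $1$ no $n$-subset of $\cF$ can satisfy the bandwidth constraint, and Problem~P3 is infeasible. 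Otherwise the greedily chosen $\cS'$ is a valid solution. This is exactly what each iteration of Algorithm~\ref{algorithm: opt_p1} does.

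The main technical step is (ii), namely that infeasibility of Problem~P3 at $n$ implies infeasibility at every $n'>n$. I would split this into two cases matching the two exit branches of the loop. Since $f$ is monotonically increasing, for $n'>n$ the feasible subset shrinks: $\cF(n')\subseteq \cF(n)$. In the first case, if $|\cF(n)|<n$, then $|\cF(n')|\le|\cF(n)|<n<n'$, so no $n'$-subset satisfying the latency constraint exists. In the second case, the sum of the $n'$ smallest bandwidth requirements in $\cF(n')$ is at least the sum of the $n'$ smallest in $\cF(n)$ (because $\cF(n')\subseteq \cF(n)$ restricts the pool), which in turn is at least the sum of the $n$ smallest in $\cF(n)$, and the latter is strictly greater than $1$ by assumption. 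Hence no $n'$-subset of $\cF(n')$ satisfies the bandwidth constraint either. This monotonicity is the heart of the argument, and I expect it to be the only step that needs a careful write-up; everything else is bookkeeping.

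Combining (i) and (ii), the loop index at which Algorithm~\ref{algorithm: opt_p1} first returns coincides with $n^\star+1$, where $n^\star$ is the maximum $n$ for which Problem~P3 admits a solution. The returned set $\cS^{(n^\star)}$ is, by (i), a valid solution of Problem~P3 at $n=n^\star$, hence feasible for Problem~P1 with $|\cS^{(n^\star)}|=n^\star$. Since no feasible $\cS$ of larger cardinality exists by (ii), $\cS^{(n^\star)}$ attains the maximum of Problem~P1, establishing optimality. \qed
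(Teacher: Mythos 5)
Your proposal is correct and follows essentially the same route as the paper: the paper's proof also rests on the per-$n$ threshold-plus-greedy feasibility check (argued in the text preceding Algorithm~\ref{algorithm: opt_p1}) together with monotonicity of infeasibility in $n$, which it dismisses as ``easily proved by contradiction.'' Your write-up merely makes explicit the two facts the paper leaves implicit --- that $\cF(n')\subseteq\cF(n)$ for $n'>n$ and that the minimum $n'$-subset bandwidth sum can only grow --- which is a welcome but not substantively different elaboration.
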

\begin{proof}
Given finding $n^\star$ that Problem~P3 is feasible for $n=n^\star$ with solution $\cS^{(n^\star)}$ and infeasible for $n=n^\star+1$, Problem~P3 is infeasible for any $n>n^\star+1$, which can be easily proved by contradiction. This means that the objective of Problem~P1 is upper bounded by $n^\star$, i.e., $|\cS|\leq n^\star$. Meanwhile, there exists a candidate point $\cS^{(n^\star)}$ satisfying all constraints with the objective $|\cS^{(n^\star)}|=n^\star$. Therefore, $\cS^{(n^\star)}$ is the optimal solution to Problem~P1. 
\end{proof}

The complexity of Algorithm~\ref{algorithm: opt_p1} is $\cO(K^2)$ as explained below. The complexity of sorting $\cK$ is $\cO(K\log K)$. In each iteration,  identifying $\cF$ and computing $\sum_{k\in \cS} \rho_{\min, k}$ both have complexity $\cO(K)$. In the worst case, the loop is repeated $K$ times, yielding the said complexity of Algorithm~\ref{algorithm: opt_p1}.

\begin{figure*}[!t]
  \centering
  \includegraphics[width=0.95\textwidth]{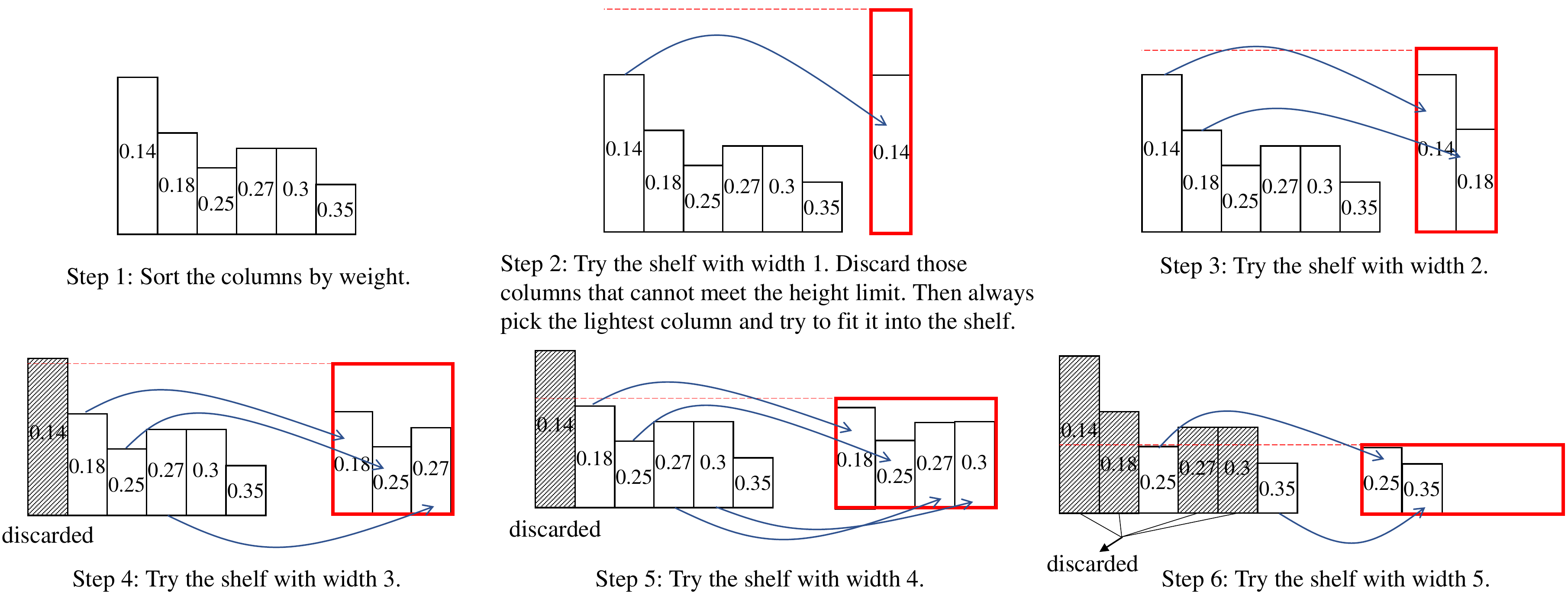}
  \caption{Shelf packing interpretation of Algorithm~\ref{algorithm: opt_p1}. Step 6 finds no packing of 5 columns, and further increasing the box width will not yield better solutions. Hence, Step 5 yields the optimal solution.  }\label{fig_alg1}
  \centering
\end{figure*}

\begin{remark}(Geometric Interpretation). 
 \emph{Algorithm~\ref{algorithm: opt_p1} can be interpreted as a packing procedure by considering the linear batch computation time model, i.e., $t_\mathsf{cp}=\hat{f}(n_\mathsf{b})=c_0 + c_1 n_\mathsf{b}$, where $c_0$ and $c_1$ are positive constants\cite{Inoue2021}. As shown in Fig.~\ref{fig_alg1}, each task can be viewed as a 2-D column with unit width, height $y_k=\frac{T-\tilde{\tau}_k}{c_1}$ and weight $\rho_{\min, k}$. Note that $y_k$ is a decreasing function of the associated latency budget, $\tilde{\tau}_k$. In other words, a looser latency constraint leads to a lower height. The latency constraint can be translated to  $y_k\leq \frac{T-c_0}{c_1}-|\cS|$. Hence, Problem~P1 is equivalent to packing as many columns as possible into a 2-D rectangular shelf with fixed perimeter $P\triangleq\frac{T-c_0}{c_1}$ under the constraint that   the columns'  sum weight is no larger than  $1$. Such a packing is termed a \emph{shelf packing}. As illustrated in Fig.~\ref{fig_alg1}, Algorithm~\ref{algorithm: opt_p1}, a best-shelf-packing algorithm increases the shelf width by $1$ in each iteration. In this process, the height constraint becomes more stringent such that  only columns with lower heights can be fitted into the shelf.  This is consistent with the intuition that a large batch size delays the exit point, and thus is suitable for tasks with loose latency requirements. The algorithm stops if no more columns can be packed into the shelf.} 
\end{remark}

\section{Multiuser Edge Inference with Early Exiting}
\label{sec: MEAIwEE}

In this section, we consider the case where users have heterogeneous  QoS requirements and thus early exiting is required. Recall that early exiting can accommodate low-latency tasks  due to the early-feedback effect and support more efficient computation by the effect of shrinking batch size. The corresponding optimal resource allocation for the current case is designed by solving Problem~P2. With early exits, Problem~P2 is a more complex problem to solve than Problem P1, the full-network inference counterpart. Algorithm~\ref{algorithm: opt_p1} for the latter, however, can be modified to develop a  low-complexity  algorithm for finding a sub-optimal solution for the former. Next, we pursue the optimal solution for Problem~P2.  {The resource allocation is divided into two cases with large and limited bandwidth,  respectively. In the former case, the bandwidth is always sufficient and its constraint is thus inactive. Consequently, the computation bottleneck dominates the communication counterpart such that the main challenge for optimal  control lies in meeting the heterogeneous requirements on computation latency and accuracy from  multiple tasks. On the other hand, the limited-bandwidth case is made more complex as the communication bottleneck also comes into play and communication latency should be accounted for. } By analyzing the former case, we obtain useful insights to propose an optimal tree-search algorithm for Problem~P2 featuring depth-first search and node pruning.

\subsection{Sub-optimal Resource Allocation}
Consider the feasibility problem, Problem~P3,  conditioned on $n$ scheduled tasks. It can be modified for the current  early exiting case as 
\begin{equation*}\text{(P4)}\quad\quad\quad
\begin{aligned}
\mathrm{find} \quad & {\cS\subseteq \cK}   \\
   \mathrm{s.t. }\quad & |\cS| = n, \\
                       & \sum_{k\in \cS} \rho_{\min, k} \leq 1, \\
                       & \sum_{d=1}^{d_k} f_d(n_d)\leq \tilde{\tau}_k, \quad \forall k \in \cS, 
\end{aligned}
\end{equation*}
where $n_d=\sum_{k\in\cS}\mathbb{I}_{d_k\geq d}$. Problems P3 and P4 differ in their last constraints on latency. Given that the number of selected tasks is fixed as $|\cS| = n$, the last constraint in the former depends only on $n$ but that in the latter is complex and depends on exit points $\{d_k\}$ for individual tasks in $\cS$. As a result, the threshold structure of  the latency constraint for the full-network inference case no longer holds in the current case, preventing the  direct application  of  Algorithm~\ref{algorithm: opt_p1} to the early exiting case. However, a sub-optimal solution for Problem P4 can be developed by tightening the latency constraint to facilitate the use of Algorithm~\ref{algorithm: opt_p1}. Specifically, given $|\cS|=n$, one can observe that
\begin{equation}
    n_d=\sum_{k\in\cS}\mathbb{I}_{d_k\geq d}\leq n, \forall d=1,2,\ldots,D,
\end{equation}
where the equality holds for all $d=1,2,\ldots,D$ if and only if $d_k= D$ for any $k\in \cS$.
Since the function  $f_d(n_d)$ is non-decreasing, the  latency constraint can be  tightened as 
\begin{equation}
    \sum_{d=1}^{d_k} f_d(n)\leq \tilde{\tau}_k,\quad \forall k \in \cS, 
\end{equation}
to have a similar form as that of Problem~P3. The tightened constraint also has the mentioned threshold structure as that in Problem~P3 despite a different form depending on the exit point $d_k$ and thus varying  over tasks. We can then write the feasible task subset as
\begin{equation}
    \label{eqn: suboptimal_feasible}
    \cF'\triangleq\left\{ k|k\in\cK, \sum\nolimits_{d=1}^{d_k} f_d(n)\leq\tilde{\tau}_k \right\},
\end{equation}
of which all the $n$-element subsets will satisfy the tightened latency constraints and accuracy requirements in Problem P4. Consequently, the sub-optimal algorithm for P2 can be obtained by modifying the feasible-subset-finding step in Algorithm~\ref{algorithm: opt_p1} by replacing $\cF$ in~\eqref{eqn:no_ee_feasible_subset} with  $\cF'$ in \eqref{eqn: suboptimal_feasible}. The modified algorithm then evaluates the sum bandwidth of $n$ tasks  in the feasible set, which are identified  using the minimum-bandwidth criterion,  to determine the feasibility of Problem~P4, and repeats solving Problem~P4 sequentially for $n=1,\ldots,K$ to find the maximum feasible value of $n$ and the corresponding task subset $\cS$.

\begin{remark} (Effects of Constraint Tightening)\emph{
 As we tighten the latency constraint by replacing $n_d$ with its upper bound $n$ earlier, it is assumed  that the batch size does not shrink as completed tasks early exit, therefore overestimating the computation latency. On the other hand, despite the assumption, early exiting is still implemented with inference results immediately fed back to users upon reaching their  corresponding exits. In other words, the resource allocation policy given constraint tightening enables edge inference to  benefit from the early-feedback effect while the effect of shrinking batch size is ignored.}
\end{remark}

\subsection{Optimal Resource Allocation with Large Bandwidth}\label{section: infy_bw}

In the rest context, we pursue the optimal algorithm for multiuser edge inference with early exiting, i.e., optimally solving Problem~P2. {To solve this  NP-hard problem with two strongly coupled constraints (i.e., the bandwidth and latency constraints), we propose the following tractable solution approach. Recall that solving Problem~P2 can be transformed into nesting solving Problem~P4, the associated   feasibility problem, into a sequential search. The key to solving Problem~P4 is its  decomposition into two steps.  First, we relax the bandwidth constraint to study the structure of feasible solutions when only the latency constraint exists, i.e., in the large-bandwidth case. This results in an optimal greedy user selection scheme for this case. The second step is inspired by the idea that the solution space of the large-bandwidth case can be enumerated to identify feasible solutions to the limited-bandwidth case, i.e., those also satisfying the bandwidth constraint. Materializing the idea leads to the design of a practical online-tree-search algorithm that optimally solves Problem~P4, with its complexity reduced by developing a tree-pruning technique.} 

To start with, we first consider the case of large bandwidth, { i.e., where the bandwidth $B$ is sufficiently large such that the bandwidth constraint is always inactive and can be thus omitted. Mathematically, a sufficient condition is that for any $\mathcal{S}\in\mathcal{K}$ that satisfies the latency constraint, the following inequality holds:
\begin{equation}
    B \geq \sum_{k\in\mathcal{S}}\frac{\ell_k}{{T \log_2(1+p_k h_k^2/N_0)}}.
\end{equation}}
For ease of exposition, we split the task set $\cK=\{1,2,\ldots,K\}$ into $D$ distinct groups $\cK_1, \cK_2, \ldots, \cK_D$, where the $m$-th group comprises tasks that exit right after traversing the $m$-th block, i.e., $\cK_m \triangleq \{k|k\in\cK, d_k=m\}$. The set of selected tasks $\cS$ can also be split into $\cS_1, \cS_2, \ldots, \cS_D$ according to the exit points, where $\cS_m \triangleq \{k|k\in\cS, d_k=m\}$ contains all selected tasks whose exit points equal to $m$. Using this notation, the block-wise batch size $n_d$, i.e., the number of selected tasks that need to traverse at least $d$ blocks, can be written as
\begin{equation} \label{equation: n_d}
\begin{aligned}
    n_d & = \sum_{k\in\cS}\mathbb{I}_{d_k\geq d} = \sum_{m=d}^D\sum_{k\in\cS}\mathbb{I}_{d_k= m} = \sum_{m=d}^D |\cS_m|.
\end{aligned}
\end{equation}
Problem~P4 is equivalent to selecting $\cS_m$ from each $\cK_m$ such that $\sum_{m=1}^D |\cS_m| = n$ while not violating the bandwidth and latency constraints. In the large-bandwidth case, the bandwidth constraint is omitted and the feasibility problem conditioned on $n$ can thus be formulated as
\begin{equation*}\text{(P5)}\quad\quad\quad
\begin{aligned}
\mathrm{find} \quad & \cS=\bigcup_{m=1}^D \cS_m   \\
   \mathrm{s.t. }\quad & \sum_{m=1}^D |\cS_m| = n, \\
                       & \sum_{d=1}^m f_d(n_d)\leq \tilde{\tau}_k, \; \forall k \in \cS_m, m=1,\ldots,D. 
\end{aligned}
\end{equation*}
Each task then has only two attributes under consideration: the latency budget $\tilde{\tau}_k$ and the exit point $d_k$. 

\subsubsection{Optimal Greedy User Selection}
Note that with $n_1=\sum_{m=1}^D|\cS_m| = n$, the latency constraint for tasks in $\cS_1$, i.e., those selected from $\cK_1$, becomes $f_1(n)\leq \tilde{\tau}_k$, which has a threshold structure, and therefore divides $\cK_1$ into $\cF_1$ and $\cK_1\setminus\cF_1$. Here, $\cF_1$ denotes the feasible subset of $\cK_1$ consisting all task in $\cK_1$ whose latency requirement is beyond the threshold $f_1(n)$, and is given as 
\begin{equation}
    \cF_1=\{k|k\in\cK_1,f_1(n)\leq \tilde{\tau}_k\}.
\end{equation}
It then follows that $\cS_1$ must be a subset of $\cF_1$. If $|\cF_1|\geq n$, we can choose $\cS_1$ as any $n$-element subset of $\cF_1$, and conclude that Problem~P5 is feasible with solution $\cS=\cS_1$. If $|\cF_1|< n$, then $\cS_1$ can be any subset of $\cF_1$, leading to many candidate searching branches. However, the following lemma suggests that only one branch needs to be explored, which is proven in Appendix A.
\begin{Lemma}[\emph{Optimal Greedy User Selection}]
\emph{
For Problem~P5, if $\vert\cF_1\vert< n$ and no solution exists such that $|\cS_1|=n^\dagger\leq|\cF_1|$, then there exists no solution such that $0\leq|\cS_1|<n^\dagger$. As a result, it is optimal to set $\cS_1=\cF_1$. }
\label{lemma: local_optimality}
\emph{} 
\end{Lemma}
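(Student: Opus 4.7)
The plan is to prove the lemma by contrapositive via an exchange argument. Suppose there is a feasible solution $(\cS_1,\ldots,\cS_D)$ for Problem~P5 with $|\cS_1|=j$ for some $0\leq j<n^\dagger$. I will construct a feasible solution $(\cS_1',\ldots,\cS_D')$ with $|\cS_1'|=n^\dagger$, which shows that infeasibility at $|\cS_1|=n^\dagger$ propagates backward to infeasibility at every smaller value of $|\cS_1|$, thereby proving the lemma.

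The construction exploits the fact that $\cF_1\setminus\cS_1$ contains at least $n^\dagger-j$ tasks, since $|\cF_1|\geq n^\dagger$. Pick any $\Delta\subseteq\cF_1\setminus\cS_1$ with $|\Delta|=n^\dagger-j$ and set $\cS_1'=\cS_1\cup\Delta$. To keep the total cardinality at $n$, discard an arbitrary $n^\dagger-j$ tasks from $\bigcup_{m\geq 2}\cS_m$; this removal is feasible because the hypothesis $n^\dagger\leq|\cF_1|<n$ yields $\sum_{m\geq 2}|\cS_m|=n-j>n^\dagger-j$. Denote the depleted groups by $\cS_m'$ for $m\geq 2$.

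The key observation is that this reshuffling only weakens the latency constraints. Indeed, the new block-wise batch sizes satisfy $n_1'=n=n_1$, and for $d\geq 2$,
\begin{equation*}
n_d' \;=\; \sum_{m\geq d}|\cS_m'| \;\leq\; \sum_{m\geq d}|\cS_m| \;=\; n_d,
\end{equation*}
since the modification only removed tasks from $\bigcup_{m\geq 2}\cS_m$ and never added any. Because each $f_d$ is non-decreasing, $\sum_{d=1}^m f_d(n_d')\leq\sum_{d=1}^m f_d(n_d)$ for every $m$. Hence every $k\in\cS_m'$ with $m\geq 2$ continues to satisfy the latency constraint inherited from the original feasible solution, while every $k\in\cS_1'$ satisfies $f_1(n_1')=f_1(n)\leq\tilde{\tau}_k$ by definition of $\cF_1$. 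Feasibility of the new solution is established, completing the contrapositive. The ``as a result'' statement then follows by instantiating the lemma with $n^\dagger=|\cF_1|$: any feasible solution can be rearranged into one with $\cS_1=\cF_1$, so this choice is optimal.

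The only delicate step is the monotonicity observation that pushing mass from $\bigcup_{m\geq 2}\cS_m$ into $\cS_1$ shrinks every downstream $n_d$ and hence loosens every downstream latency constraint; once that is in hand, the exchange goes through cleanly. I do not expect a significant obstacle in this argument, since tasks exiting at block $1$ only contribute to $n_1=n$, a quantity already fixed by the feasibility target. The real difficulty of the paper lies elsewhere, namely in re-introducing the bandwidth constraint (Problem~P4) where this greedy monotonicity breaks down and motivates the subsequent tree-search with pruning.
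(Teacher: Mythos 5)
Your proof is correct and follows essentially the same exchange argument as the paper: move tasks from $\cF_1\setminus\cS_1$ into $\cS_1$ while removing tasks that exit later, observe that every downstream batch size $n_d$ ($d\geq 2$) can only shrink so the monotone $f_d$'s keep all latency constraints satisfied, and conclude by contradiction/contrapositive. The only difference is that you perform the swap in one bulk step of size $n^\dagger-j$ whereas the paper iterates a one-element swap, which is an immaterial distinction.
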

From Lemma~\ref{lemma: local_optimality}, we can assert that given $|\cF_1|< n$, if no solution exists such that $\cS_1=\cF_1$, then Problem~P5 is infeasible and it is unnecessary to search for solutions such that $\cS_1\subsetneq\cF_1$. It follows that the \emph{greedy user selection} should be adopted for selecting $\cS_1$, i.e., $\cS_1=\cF_1$ as stated in the lemma. The intuition behind greedy selection is given as follows. We observe that tasks in $\cK_1$ exit after traversing only one block, no longer contributing to the batch size of subsequent blocks. Hence, to maximize the effect of shrinking batch size, the tasks in $\cK_1$ should be preferably selected, which motivates the selection of the entire feasible subset of $\cK_1$ as  $\cS_1$.

\subsubsection{Nested Sub-problems}

Given $\cS_1=\cF_1$, the next question is: \emph{Can we find $\{\cS_2, \ldots, \cS_D\}$ such that $\sum_{m=2}^D |\cS_m| = n-|\cS_1|$ while meeting the latency constraints?} The sub-problem can be generalized to be that given  $\cS_1,\ldots,\cS_{a-1}$,  find $\{\cS_a,\ldots,\cS_D\}$   such that $\sum_{m=a}^D |\cS_m| = n-\sum_{m=1}^{a-1} |\cS_m|$. Mathematically, 
\begin{equation*}\text{(P5($a$))}\quad
\begin{aligned}
\mathrm{find} \quad & \cS=\bigcup_{m=a}^D \cS_m   \\
   \mathrm{s.t. }\quad & \sum_{m=a}^D |\cS_m| = n', \\
                       & \sum_{d=a}^m f_d(n_d)\leq \tilde{\tau}'_k,\; \forall k \in \cS_m, m=a,\ldots,D. 
\end{aligned}
\end{equation*}
In Problem P5($a$), $n'=n_a=n-\sum_{m=1}^{a-1} |\cS_m|$ is the number of tasks the sub-problem aims to select and $\tilde{\tau}'_k=\tilde{\tau}_k-\sum_{d=1}^{a-1}f_d(n_d)$ is the remaining latency budget, given that a time duration of $\sum_{d=1}^{a-1}f_d(n_d)$ is needed to compute the first $a-1$ blocks. Note that $\{\tilde{\tau}'_k\}$ is constant in Problem~P5($a$) because $n_d=n-\sum_{m=1}^{d-1}\vert\cS_m\vert$ is fixed for $d=1,\ldots,a$ given that $\cS_1,\ldots,\cS_{a-1}$ has been selected. 
The above sub-problem, Problem~P5($a$), then has the same structure as Problem~P5, except that it has reduced dimension. Most importantly, for selection of $\cS_a$, the latency constraint has a threshold structure: $f_a(n')\leq \tilde{\tau}'_k, \forall k \in \cS_a$.  Consequently, the feasible subset of $\cK_a$, denoted by $\cF_a$, is given by $\cF_a=\{k|k\in\cK_a,f_a(n')\leq \tilde{\tau}'_k\}$. If $|\cF_a|\geq n'$, then Problem~P5($a$) is feasible by choosing $\cS_a$ to be an arbitrary $n'$-element subset of $\cF_a$. Otherwise, by Lemma~\ref{lemma: local_optimality} we greedily choose $\cS_a$ to be the entire feasible subset of $\cK_a$, i.e., $\cS_a=\cF_a$, and then a new sub-problem, Problem~P5($a+1$), arises for choosing $\{\cS_{a+1},\ldots,\cS_D\}$ in the case of $a<D$. It turns out that a series of \emph{nested} sub-problems subsequently arises, i.e., P5($a=2$), P5($a=3$), ..., P5($a=D$). Moreover, the following proposition on the feasibility of these sub-problems holds for $a=1,\ldots,D-1$, where Problem~P5($a=1$) refers to Problem~P5 for ease of notation.
\begin{proposition}\label{prob: sub_prob_infeas}
\emph{If $|\cF_a|< n'$, the feasibility of Problem~P5($a+1$) implies that of Problem~P5($a$).}
\end{proposition}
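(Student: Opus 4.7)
The plan is to prove Proposition~2 by explicit construction. Given a feasible solution $\{\cS_{a+1}, \ldots, \cS_D\}$ of Problem~P5($a+1$), I will prepend the greedy choice $\cS_a = \cF_a$ to obtain a candidate solution $\{\cS_a, \cS_{a+1}, \ldots, \cS_D\}$ for Problem~P5($a$) and verify that it satisfies all the constraints. The construction should work because the parameters $n'' = n' - |\cF_a|$ and $\tilde{\tau}''_k = \tilde{\tau}'_k - f_a(n')$ that define Problem~P5($a+1$) from Problem~P5($a$) were chosen precisely to absorb the effect of selecting the entire $\cF_a$ at block $a$.

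First, I would check the cardinality constraint. Since $\{\cS_{a+1},\ldots,\cS_D\}$ is feasible for P5($a+1$), it satisfies $\sum_{m=a+1}^D |\cS_m| = n'' = n' - |\cF_a|$, so appending $\cS_a = \cF_a$ yields $\sum_{m=a}^D |\cS_m| = n'$, which is exactly the cardinality constraint of P5($a$).

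Next, I would verify the latency constraints in two cases. The crucial observation is that with the choice $\cS_a = \cF_a$, the block-wise batch sizes $n_d$ for $d \geq a+1$ as evaluated inside P5($a$) coincide with those inside P5($a+1$), since both equal $n' - |\cF_a| - \sum_{m=a+1}^{d-1}|\cS_m|$ by~\eqref{equation: n_d}. For any $k \in \cS_a = \cF_a$, the P5($a$) latency constraint collapses to $f_a(n') \leq \tilde{\tau}'_k$, which is immediate from the definition of $\cF_a$. For any $k \in \cS_m$ with $m \geq a+1$, the P5($a$) latency constraint decomposes as
\begin{equation*}
\sum_{d=a}^m f_d(n_d) \;=\; f_a(n') + \sum_{d=a+1}^m f_d(n_d) \;\leq\; f_a(n') + \tilde{\tau}''_k \;=\; \tilde{\tau}'_k,
\end{equation*}
where the inequality uses the feasibility of $\cS_m$ in P5($a+1$) and the final equality uses the definition of $\tilde{\tau}''_k$.

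The main obstacle is nothing more than careful bookkeeping, namely tracking how the reduced parameters $(n'', \tilde{\tau}''_k)$ of the inner sub-problem relate to the original parameters $(n', \tilde{\tau}'_k)$ of the outer one, and confirming that the block-wise batch size $n_d$ is interpreted consistently in both problems. There is no genuine combinatorial difficulty here: the result follows from the recursive structure that was built into the nested sub-problem formulation, rather than from any deeper property of the greedy rule itself.
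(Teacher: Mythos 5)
Your construction is correct, and it proves exactly the implication stated: prepending $\cS_a=\cF_a$ to a feasible solution of P5($a{+}1$) satisfies the cardinality constraint, the latency constraint for $k\in\cF_a$ reduces to $f_a(n')\leq\tilde{\tau}'_k$ by the very definition of $\cF_a$, and for $k\in\cS_m$ with $m\geq a{+}1$ the identity $\tilde{\tau}''_k=\tilde{\tau}'_k-f_a(n')$ together with the unchanged block-wise batch sizes $n_d$ (for $d\geq a{+}1$) yields the remaining constraints. This is precisely the direction the paper's own proof disposes of with the single word ``obviously,'' so your bookkeeping is the natural filling-in of that step. Where your write-up and the paper diverge is in scope: note that your argument never uses the hypothesis $|\cF_a|<n'$ and never invokes Lemma~\ref{lemma: local_optimality}, which signals that you have proved only the easy half of what the paper's proof actually delivers. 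The substance of the paper's proof is the converse propagation --- if P5($a{+}1$) is infeasible then P5($a$) is infeasible --- which is what the recursive algorithm relies on to certify infeasibility of Problem~P5 after committing to the single greedy branch $\cS_a=\cF_a$; that direction genuinely needs Lemma~\ref{lemma: local_optimality} (the absence of a solution with $\cS_a=\cF_a$ rules out every $\cS_a\subsetneq\cF_a$ as well) and cannot be obtained from a purely constructive argument like yours. So, as a proof of the proposition as literally stated, your proposal is complete and correct, but be aware that it captures only the trivial direction of the equivalence the paper proves and subsequently uses.
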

\begin{proof}Obviously, Problem~P5($a$) is feasible if Problem~P5($a+1$) is feasible. If Problem~P5($a+1$) is infeasible, then there exists no solution for Problem~P5($a$) such that $\cS_a=\cF_a$. Consequently, by Lemma~\ref{lemma: local_optimality}, there exists no solution such that $\cS_a\subset\cF_a$ either, which proves the infeasibility of Problem~P5($a$). This completes the proof. 
\end{proof}

In particular, the innermost sub-problem, P5($a=D$), is determined to be infeasible if $|\cF_D|<n'=n-\sum_{m=1}^{D-1} |\cS_m|$.  

\subsubsection{Recursive Algorithm}
Given the structure of nested sub-problems, Problem~P5 can be solved recursively. The recursive algorithm starts from Problem~P5($a=1$). For $a=1,\ldots,D-1$, Problem~P5($a$) either immediately returns a feasible solution when $|\cF_a|\geq n'$, or, when $|\cF_a|< n'$, greedily selects $\cS_a=\cF_a$ and enters Problem~P5($a+1$). On one hand, during the recursive process, if Problem~P5($a$) is found to be feasible, then Problem~P5($a-1$) is feasible by Proposition~\ref{prob: sub_prob_infeas}, and thus Problem~P5 is feasible by induction. The algorithm can then stop and return $\cS=\bigcup_{m=1}^a \cS_m$ as a feasible solution to Problem~P5. On the other hand, if the algorithm reaches Problem~P5($a=D$), but determines Problem~P5($a=D$) to be infeasible, then by induction Problem~P5 is infeasible. The detailed algorithm  is presented  in Algorithm~\ref{algorithm: greedy}.

{\subsubsection{Complexity Analysis}
A single run of Algorithm~\ref{algorithm: greedy} involves at most $D$ iterations. The computational complexity of each iteration is dominated by that of finding $n'$, $\tilde{\tau}_k'$ and $\cF_d$, and thus given by $\cO(\max\{K,D\})$. The overall complexity of Algorithm~\ref{algorithm: greedy} is given by $\cO(D\max\{K,D\})$. }

\begin{algorithm}[t]
\caption{Greedy-user-selection Algorithm for Solving  Problem~P5}
\label{algorithm: greedy}
\textbf{Input:} The task index set $\cK=\{1,2,\ldots,K\}$, the bandwidth requirements $\{\rho_{\min, k}\}_{k=1}^{K}$, the latency constraints $\{\tilde{\tau}_k\}_{k=1}^{K}$, the required exit points $\{d_k\}_{k=1}^{K}$ and the desired number of tasks $n$;\\
\textbf{Initialize:} $\cK_m \triangleq \{k|k\in\cK, d_k=m\}$, $\cS_m=\emptyset$ for $m=0,\ldots,D$;\\
\textbf{for} $a=1,2,\cdots,D$ \textbf{do}\\
    \begin{enumerate}
        \item[] Find $n'$ and $\tilde{\tau}_k'$ for Problem~P5($a$);
        \item[] $\cF_a\leftarrow \{k|k\in\cK_a, f_a(n')\leq \tilde{\tau}_k'\}$;
        \item[] \textbf{if} $|\cF_a| \geq n'$ \textbf{then}\\
        \begin{enumerate}
            \item[] $\cS_a\leftarrow$ any $n'$-element subset of $\cF_a$;
            \item[] \textbf{return} $\cS=\bigcup_{m=1}^a \cS_m$;
        \end{enumerate}
        \item[] \textbf{else} $\cS_a\leftarrow\cF_a$; 
    \end{enumerate}
\textbf{end for} \\
\textbf{return} no solution\\    
\end{algorithm}

\subsection{Optimal Resource Allocation with Limited  Bandwidth}
Consider Problem~P4 for the current case with limited bandwidth conditioned on $n$ scheduled task, which is equivalent to Problem P5 in the preceding sub-section added with the bandwidth constraint. Due to the extra constraint, solving the former requires a more sophisticated approach than  Algorithm~\ref{algorithm: greedy}. Given the combinatorial nature of Problem~P4, we adopt the solution method of tree-search.  Algorithm~\ref{algorithm: greedy} that solves Problem P5 is used as a mechanism for checking the feasibility of meeting the latency constraint so as to reduce the search complexity. In the sequel, the search tree is constructed, followed by the design of an efficient search algorithm integrating depth-first search and node pruning.

\subsubsection{Search-tree Construction}
As shown in Fig.~\ref{fig_trees_1}, a search tree originates from one root node, denoted by $\bv_0$. The definitions of the root node, general nodes and the steps of \emph{visiting} a node are given as follows. 

\begin{enumerate}
    \item[(a)] \emph{Root node.} The root node is represented by an empty vector $\bv_0=[\;]$. The child nodes of the root node, each representing one possible choice of $\cS_1$ that satisfies the latency constraint, form the set of nodes with depth $1$. Naturally, $\cS_1$ can be any subset of the feasible subset $\cF_1$ as defined in \eqref{eqn: suboptimal_feasible}, which leads to $2^{|\cF_1|}$ choices. However, tasks in $\cF_1$ are homogeneous in terms of the latency requirement $\tilde{\tau}_k$ and accuracy requirement $d_k$. Thus priorities are given to tasks with lower bandwidth requirements. As a result, given its cardinality, $\cS_1$ automatically consists of $\vert\cS_1 \vert$ elements with minimum bandwidth requirements from $\cF_1$. This greatly reduces the number of branches as each branch now represents the cardinality of $\cS_1$ with significantly fewer choices. Specifically, $|\cS_1|$ can be any value between $0$ and $|\cF_1|$ when $\vert\cF_1\vert\leq n$, and when $\vert\cF_1\vert>n$, $\vert\cS_1\vert$ cannot exceed $n$. Hence, we need $M+1$ child nodes indexed by $0,1,\ldots,M$, where $M=\min\{n,\vert\cF_1\vert\}$, to represent all possible searching branches with the node index indicating the size of $\cS_1$. 
    
    \item [(b)] \emph{General nodes.} The depth of an arbitrary node $\bv$, denoted by $d(\bv)$, is the length of the path connecting it to the root node. As a general rule, the index of a node with depth $m$ indicates the size of $\cS_m$, which is unique among all its sibling nodes. We can then uniquely refer to a node $\bv$ with a vector $\bv=[v_1,v_2,\ldots,v_{d(\bv)}]$ to record the indices of nodes along the path from the root node to node $\bv$, where $v_{d(\bv)}$ denotes the index of $\bv$ itself. Hence, a general node $\bv=[v_1,v_2,\ldots,v_{d(\bv)}]$ defines a partial solution with $\vert \cS_m\vert=v_m$, where $m=1,\ldots,d(\bv)$. Note that such a partial solution is guaranteed to satisfy the latency constraint because $\cS_m$ is always selected out of the feasible subset of $\cK_m$.
    
    \item [(c)] \emph{Visiting a node.} To determine the properties of node $\bv$ and discover its potential child nodes, we need to \emph{visit} node $\bv$ by executing  the following operations.  We determine $\bv$ to be a \emph{leaf node} without child nodes if it satisfies either of the two following conditions: (1) $\sum_{m=1}^{d(\bv)}v_m=n$, in which case the accumulated number of selected tasks reaches $n$, suggesting that $\bv$ represents a solution to Problem~P5 and that its sum bandwidth requirement should now be checked; (2) $\sum_{m=1}^{d(\bv)}v_m<n$ and $d(\bv)=D$, in which case $\bv$ has reached the maximum depth but is not a solution to Problem~P5. These two types of nodes are marked with stripes and dots in Fig.~\ref{fig_trees_1}, respectively. If none of these conditions is satisfied, i.e., $d(\bv)<D$ and $\sum_{m=1}^{d(\bv)}<n$, then node $\bv$ has child nodes to be discovered because $\bv$ represents a partial solution that can possibly develop into full solutions. Conditioned on $\bv$, the sub-problem to find such solutions can be formulated. Specifically, by subtracting the accumulated number of selected tasks along the path to node $\bv$, the number of tasks that the sub-problem aims to select is 
\begin{equation}\label{eq: n_prime}
    n'=n-\sum_{m=1}^{d(\bv)}v_m,
\end{equation}
while the remaining latency budget is given by
\begin{equation}\label{eq: tau_prime}
    \tilde{\tau}_k'=\tilde{\tau}_k-\sum_{d=1}^{d(\bv)}f_d\left(n-\sum_{m=0}^{d-1} v_m\right),
\end{equation}
where we let $v_0=0$ for ease of notation. The sub-problem conditioned on $\bv$ that aims to find $\{\cS_{d(\bv)+1},\ldots,\cS_{D}\}$ with a total of $n'$ tasks is then mathematically written as
\begin{equation*}\text{(P4(}\bv\text{))}
\begin{aligned}
\mathrm{find}\quad\quad\quad  \cS'= & \bigcup_{m=d(\bv)+1}^D \cS_m   \\
   \mathrm{s.t. }\quad  \sum_{m=d(\bv)+1}^D & |\cS_m| = n', \\
                        \sum_{k\in \cS'} \quad & \rho_{\min, k} \leq 1-\rho(\bv), \\
                        \sum_{d=d(\bv)+1}^m & f_d(n_d)\leq \tilde{\tau}_k',\; \forall k \in \cS_m,\\ &  m=d(\bv)+1,\ldots,D,
\end{aligned}
\end{equation*}
where $\rho(\bv)$ denotes the accumulated sum bandwidth requirement along the path to node $\bv$. The sub-problem with large bandwidth, i.e. Problem P5($\bv$), can be similarly defined. Noting that $n_{d(\bv)+1}=n'$ is constant, the feasible subset of $\cK_{d(\bv)+1}$ is given by
\begin{equation}\label{eq: F_prime}
    \cF_{d(\bv)+1}=\{k\in\cK_{d(\bv)+1}|f_{d(\bv)+1}(n')\leq \tilde{\tau}_k'\}.
\end{equation}
Consequently, the set of child nodes of $\bv$, $\cN_\bv$, is discovered and given by
\begin{equation}
    \cN_\bv=\{[v_1,v_2,\ldots,v_{d(\bv)+1}]| v_{d(\bv)+1}=0,1,\ldots,M\},
\end{equation}
where $M = \min\{n',|\cF_{d(\bv)+1}|\}$.
\end{enumerate}

\begin{figure}[!t]
  \centering
  \includegraphics[width=85mm]{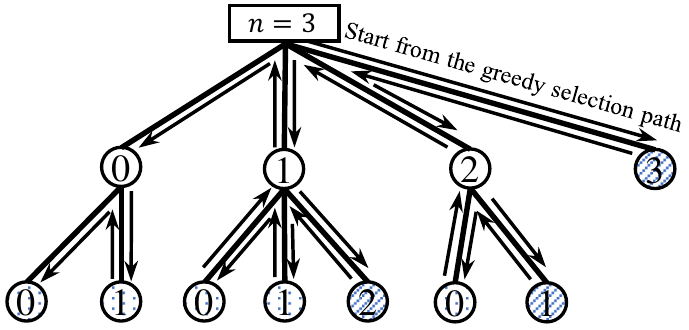}
  \caption{Example of the depth-first exhaustive tree-search when the maximum depth $D=2$ and $n=3$. The arrows indicate the depth-first search directions. }\label{fig_trees_1}
  \centering
\end{figure}

\subsubsection{Efficient Tree-search}
Consider a search over the tree constructed above to  solve Problem P4 for given $n$ scheduled users/tasks. In practice, when $n$ is large, it is impractical to store the entire tree in memory and  check  the feasibility of all nodes. The difficulty is overcome by designing an efficient tree-search algorithm. At the core of the design is to answer two questions: 1) \emph{In what sequence should the nodes be visited?}; 2) \emph{How can the results obtained in the preceding large-bandwidth case  be leveraged for complexity reduction?}

To answer the first question, recall that Algorithm~\ref{algorithm: greedy} is an efficient way to check the feasibility of Problem~P5, a relaxed version of the targeted Problem P4. We hence propose that the greedy user selection in Algorithm~\ref{algorithm: greedy} be first attempted because the infeasibility of Problem~P5 implies that of Problem~P4. Two design principles follow. First, among all child nodes of any node, the one with the maximum index $v_{d(\bv)}$, which is associated with greedily selecting tasks from the feasible set  $\cF_{d(\bv)}$,  should be prioritized. Second, the searching direction should prioritize depth rather than breadth, to ensure that we quickly reach the first leaf node following the greedy-selection path. The above principles motivate the use of \emph{depth-first search} (DFS) \cite{chen2011applied}. Specifically, upon visiting any node, among all its unvisited child nodes, the next step is to visit the one with the largest index, increasing the search depth  by $1$. Upon reaching a leaf node or a node without any unvisited child node, we \emph{backtrack} to visit its parent node and repeat the previous step. An example of  DFS is provided in   Fig.~\ref{fig_trees_1}. The stopping criterion is described as follows. During the search, if we encounter a node $\bv$ that represents a solution for  Problem~P5, such as the nodes with stripes in Fig.~\ref{fig_trees_1}, we check whether $\bv$ satisfies the bandwidth constraint. If so, then Problem~P4 is solved and the DFS stops. Otherwise, the DFS continues until all nodes have been visited. 

To answer the second question, there exist two methods for complexity reduction. First,  if the first leaf node discovered by DFS is not a solution for Problem~P5, we conclude that Problem~P4 is also infeasible, stopping the search. The second method is to prune nodes by evaluating if they satisfy the condition of being \emph{fathomed} defined as follows. 

\begin{definition}\label{def: fathom}
\emph{
A node $\bv$ is said to be \emph{fathomed} if deemed unable to develop into a solution of Problem~P5, namely that either of the following two conditions is satisfied. 
\begin{enumerate}
    \item[(1)] Node $\bv$ is a leaf node but not a solution to Problem~P5, i.e., $\sum_{m=1}^{d(\bv)}v_m<n$ and $d(\bv)=D$.
    \item[(2)] Node $\bv$ is not a leaf node, but its associated sub-problem, i.e., Problem~P5($\bv$), is infeasible.
\end{enumerate}}
\end{definition} 
Based on  the definition, fathomed nodes and their children  need not be further visited. If we can fathom nodes before visiting them, the associated   computational cost can be eliminated. Useful results for the purpose are presented in the  following theorem. 
\begin{theorem} [\emph{Fathoming Criteria}] \label{theorem: fathom}
\emph{\begin{enumerate}
    \item[(1)] If node $\bv$ with index $n^\dagger$ is fathomed, so are its sibling nodes with indices smaller than $n^\dagger$.
    \item[(2)] If all the child nodes of node $\bv$ are fathomed, so is  $\bv$;
\end{enumerate}}
\end{theorem}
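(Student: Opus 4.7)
The plan is to handle the two claims separately, reducing each one to the structural results the paper has already developed, namely Lemma~\ref{lemma: local_optimality} and the recursive nested-sub-problem structure P5($a$)/P5($\bv$).

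For claim (1), I would split according to which clause of Definition~\ref{def: fathom} certifies that $\bv$ is fathomed. If $\bv$ satisfies clause~(1), then $d(\bv)=D$ and $\sum_{m=1}^{D} v_m<n$; a sibling $\bv'$ with a strictly smaller terminal index has the same ancestors, so $\sum_{m=1}^{D} v'_m<\sum_{m=1}^{D} v_m<n$, and $\bv'$ is a non-solution leaf at depth $D$, i.e. also fathomed by clause~(1). The substantive case is clause~(2): P5($\bv$) is infeasible. Here I would apply Lemma~\ref{lemma: local_optimality} to the parent's sub-problem P5($\bv_{\mathrm{parent}}$). Because siblings share the path to their parent, they share the fixed batch size $n_{d(\bv)}$, the remaining latency budgets $\tilde{\tau}_k'$, and therefore the feasible subset $\cF_{d(\bv)}$; the only difference is the chosen cardinality $|\cS_{d(\bv)}|$. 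Infeasibility of P5($\bv$) is exactly the statement that P5($\bv_{\mathrm{parent}}$) has no solution with $|\cS_{d(\bv)}|=n^\dagger$, and Lemma~\ref{lemma: local_optimality} then rules out solutions with any strictly smaller cardinality, so P5($\bv'$) is infeasible and $\bv'$ is fathomed.

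For claim (2), I would argue by contradiction. Assume every child of $\bv$ is fathomed yet $\bv$ is not. Since $\bv$ has children, it cannot satisfy clause~(1), so the only way it fails clause~(2) is if P5($\bv$) admits a feasible solution $\cS'=\bigcup_{m=d(\bv)+1}^{D}\cS_m$. I would then single out the child $\bv^\star$ whose index equals $|\cS_{d(\bv)+1}|$. One first verifies $\bv^\star$ is indeed a legitimate child: the cardinality constraint gives $|\cS_{d(\bv)+1}|\le n'$, while the latency constraint at block $d(\bv)+1$ forces $\cS_{d(\bv)+1}\subseteq\cF_{d(\bv)+1}$, so $|\cS_{d(\bv)+1}|\le M=\min\{n',|\cF_{d(\bv)+1}|\}$. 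The tail $\bigcup_{m\ge d(\bv)+2}\cS_m$ then certifies the feasibility of P5($\bv^\star$), so $\bv^\star$ fails both clauses of Definition~\ref{def: fathom}, contradicting the hypothesis.

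The main obstacle I anticipate is in claim~(1), where I need Lemma~\ref{lemma: local_optimality} to transfer cleanly from the top-level Problem P5 to the sub-problem P5($\bv_{\mathrm{parent}}$). The paper's nested-sub-problem construction asserts the two have identical structure, and I would explicitly appeal to that so that the threshold/monotonicity argument behind the lemma (reducing $|\cS_{d(\bv)}|$ never decreases any later batch size $n_m$ and hence never loosens a latency constraint) applies at every depth. A subsidiary technical point arises in claim~(2): the tree deterministically fills $\cS_{d(\bv)+1}$ with the lowest-bandwidth tasks in $\cF_{d(\bv)+1}$, whereas the hypothetical P5 solution may use different tasks; I would resolve this by noting that P5 is bandwidth-blind, so its feasibility depends only on the cardinalities $|\cS_m|$ and the exit-point distribution, and thus any size-$|\cS_{d(\bv)+1}|$ subset of $\cF_{d(\bv)+1}$ works as a certificate for P5($\bv^\star$).
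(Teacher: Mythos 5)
Your proposal is correct and follows essentially the same route as the paper: part (1) is handled by passing to the parent's sub-problem P5($\bv_{\mathrm{parent}}$) and invoking Lemma~\ref{lemma: local_optimality} there (using that siblings share the parent's parameters and that P5 feasibility depends only on the cardinalities $\vert\cS_m\vert$), exactly as in the paper's proof. For part (2) you merely spell out, via the child $\bv^\star$ indexed by $\vert\cS_{d(\bv)+1}\vert$, the contrapositive that the paper dismisses as obvious from Definition~\ref{def: fathom}, so there is no substantive difference in approach.
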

\begin{proof}
 To prove (1), let node $\bw$ denote the parent node of node $\bv$. Since node $\bv$ is fathomed, Problem~P5($\bw$) has no solution such that $\vert\cS_{d(\bv)}\vert=n^\dagger$. By applying Lemma~\ref{lemma: local_optimality} to Problem~P5($\bw$), we can infer that there exists no solution such that $0\leq\vert\cS_{d(\bv)}\vert<n^\dagger$. Then for any child node of $\bw$, denoted by $\bv'$, with $v'_{d(\bv)}<v_{d(\bv)}$, Problem~P5($\bv'$) is infeasible. Therefore, $\bv'$ is fathomed. Furthermore,  (2) is obvious by the definition of fathomed nodes, completing the proof. 
\end{proof}

\begin{figure*}[!t]
  \centering
  \includegraphics[width=0.75\textwidth]{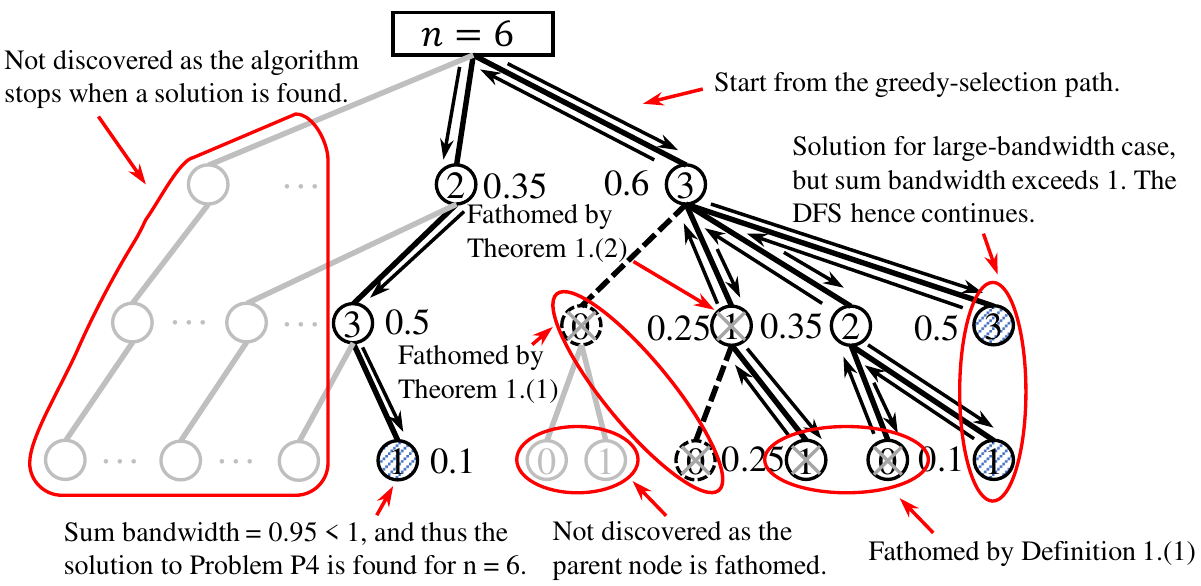}
  \caption{Searching for the solution to Problem~P4 for $n=6$ when the maximum depth $D=3$ using Algorithm~\ref{algorithm: optimal}. Nodes deemed unnecessary to be visited are plotted with dashed lines, and undiscovered nodes plotted with grey lines. }\label{fig_trees_2}
  \centering
\end{figure*}

Result (1) in Theorem~\ref{theorem: fathom} combined with DFS reduces the search complexity by pruning all unvisited sibling nodes of a fathomed node and their child nodes  without visiting them. On the other hand, result (2) allows 
the pruning of the parent node of a set of fathomed nodes, and thus the parent node's siblings, along with their children, can be pruned too. 

\begin{algorithm}[]
\caption{Optimal Algorithm for Solving Problem~P2}
\label{algorithm: optimal}
\textbf{Main Algorithm}\\
\textbf{Input:} The task index set $\cK=\{1,2,\ldots,K\}$, the bandwidth requirements $\{\rho_{\min, k}\}_{k=1}^{K}$, the latency constraints $\{\tilde{\tau}_k\}_{k=1}^{K}$;\\
\textbf{Initialize:} $\cK_d \triangleq \{k|k\in\cK, d_k=d\}$, $\cS=\emptyset$;\\
\textbf{for} $n=1,2,\cdots,K$ \textbf{do}\\
    \begin{enumerate}
        \item[]\!\!\!\!\!\!\!\!\! $\mathbf{v}_0 = [\;]$; \quad   \emph{-Initialize the root node as an empty vector.}\\
        \item[]\!\!\!\!\!\!\!\!\! Call DFS($\mathbf{v}_0$, $n$);
        \item[]\!\!\!\!\!\!\!\!\! \textbf{if} DFS($\mathbf{v}_0$, $n$) returns solution $\cS^{(n)}$ \textbf{then}\\
        \begin{enumerate}
            \item[]\!\!\!\!\!\!\!\!\! $\cS\leftarrow\cS^{(n)}$;\quad   \emph{-Admit $\cS^{(n)}$ as the new best solution.}
        \end{enumerate}
        \item[]\!\!\!\!\!\!\!\!\! \textbf{else} \textbf{break the loop}
    \end{enumerate}
\textbf{end for} \\
\textbf{return} $\cS$\\    
\vspace{0.5cm}
\textbf{function} DFS($\mathbf{v}$, $n$)\\
    \begin{enumerate}
        \item[] \!\!\!\!\!\!\!\!\! $d(\bv)\leftarrow$length($\bv$); \quad\quad\quad\quad\quad   \emph{-Get the depth of node $\bv$.}\\
        \item[] \!\!\!\!\!\!\!\!\! \textbf{Case 1:} $\sum_{m=1}^{d(\bv)} v_m = n$;
        \item[] \!\!\!\!\!\!\!\!\! \emph{(Node $\bv$ is a solution to the large-bandwidth case)}\\
        \begin{enumerate}
            \item[] \!\!\!\!\!\!\!\!\! Recover the task subset $\cS^{(n)}$ from $\bv$;
            \item[] \!\!\!\!\!\!\!\!\! \textbf{if} $\sum_{k\in\cS^{(n)}}\rho_{\min, k}\leq 1$ \quad \emph{-Evaluate the sum bandwidth requirement.}
            \begin{enumerate}
                \item[] \!\!\!\!\!\!\!\!\! \textbf{return} $\cS^{(n)}$ as the solution to Problem~P4
            \end{enumerate}
            \item[]  \!\!\!\!\!\!\!\!\! \textbf{else}
            \begin{enumerate}
                \item[] \!\!\!\!\!\!\!\!\! \textbf{return} no solution
            \end{enumerate}
        \end{enumerate}
        \item[] \!\!\!\!\!\!\!\!\! \textbf{Case 2:} $\sum_{m=1}^{d(\bv)} v_m < n$ and $d(\bv)=D$;
        \item[] \!\!\!\!\!\!\!\!\! \emph{(Node $\bv$ is a leaf node but not a solution to the large-bandwidth case)}\\
        \begin{enumerate}
            \item[] \!\!\!\!\!\!\!\!\! Mark node $\bv$ as fathomed.
            \item[] \!\!\!\!\!\!\!\!\! \textbf{return} no solution 
        \end{enumerate}
        \item[] \!\!\!\!\!\!\!\!\! \textbf{Case 3:} $\sum_{m=1}^{d(\bv)} v_m < n$ and $d(\bv)<D$;
        \item[] \!\!\!\!\!\!\!\!\! \emph{(Node $\bv$ is not a solution but has child nodes that can possibly develop into solutions. )}\\
        \begin{enumerate}
            \item[] \!\!\!\!\!\!\!\!\! Find $n'$ and $\tilde{\tau}'_k$ by \eqref{eq: n_prime} and \eqref{eq: tau_prime};
            \item[] \!\!\!\!\!\!\!\!\! Find the feasible subset $\cF_{d(\bv)+1}$ by \eqref{eq: F_prime};
            \item[] \!\!\!\!\!\!\!\!\! $M\leftarrow\min\{n',|\cF_{d(\bv)+1}|\}$;
            \item[] \!\!\!\!\!\!\!\!\! \textbf{for} $v_{d(\bv)+1}=M,M-1,\ldots,0$ \quad\quad \emph{-Prioritizing child nodes with larger index.}\\ 
            \begin{enumerate}
                \item[] \!\!\!\!\!\!\!\!\! $\bv' \leftarrow [\bv, v_{d(\bv)+1}]$;
                \item[] \!\!\!\!\!\!\!\!\! Call DFS($\mathbf{v}'$, $n$);
                \item[] \!\!\!\!\!\!\!\!\! \textbf{if} DFS($\mathbf{v}'$, $n$) returns solution $\cS^{(n)}$ \textbf{then}
                \begin{enumerate}
                    \item[] \!\!\!\!\!\!\!\!\! \textbf{return} $\cS^{(n)}$ as the solution to Problem~P4
                \end{enumerate}
                \item[]  \!\!\!\!\!\!\!\!\! \textbf{else} \textbf{if} $\bv'$ is fathomed \textbf{then}
                \begin{enumerate}
                    \item[] \!\!\!\!\!\!\!\!\! Mark all unvisited child nodes of $\bv$ as fathomed. 
                    \item[] \!\!\!\!\!\!\!\!\! \textbf{break the loop}
                \end{enumerate}
            \end{enumerate}
            \item[] \!\!\!\!\!\!\!\!\! \textbf{if} all child nodes of $\bv$ is fathomed \textbf{then}
                \begin{enumerate}
                    \item[] \!\!\!\!\!\!\!\!\! Mark $\bv$ as fathomed. 
                \end{enumerate}
            \item[] \!\!\!\!\!\!\!\!\! \textbf{return} no solution
        \end{enumerate}
    \end{enumerate}
\textbf{end function}
\end{algorithm}

\begin{example}
\emph{To illustrate the said complexity reduction,  Fig.~\ref{fig_trees_2} shows the complete process of tree-searching for Problem~P4 with $n=6$. The de facto complexity reduction comes from the nodes deemed unnecessary to be visited, which are plotted using dashed lines, and nodes not discovered, plotted using gray lines. It can be observed that as the DFS fathoms the leaf node $[3, 1, 1]$ by Definition~\ref{def: fathom}, $[3, 1, 0]$ is fathomed by Theorem~\ref{theorem: fathom}.(2), and then the parent node $[3, 1]$ is fathomed by Theorem~\ref{theorem: fathom}.(1) since all its child nodes are fathomed. Applying Theorem~\ref{theorem: fathom}.(2) on $[3, 1]$, $[3, 0]$ can be immediately fathomed and skipped, of which all the child nodes need not be discovered. The DFS then explores a new branch and subsequently finds a solution $[2, 3, 1]$.}
\end{example}

It is worth pointing out by Theorem~\ref{theorem: fathom}, if the first leaf node that DFS encounters is fathomed, then we can infer that all other nodes are fathomed in a chain. Hence, the aforementioned first method for complexity reduction is in fact a special case of the latter one.

By now, the algorithm for efficient tree-search is in order, which solves  Problem~P4. Recall that  the optimal policy  for resource allocation with early exiting results from solving Problem~P2. Then the problem  can be solved by sequentially solving Problem~P4 via the tree-search  for $n=1,\ldots,K$ until we find the smallest $n^\star$ such that $(n^\star+1)$ makes Problem~P4 infeasible. The proof of optimality is similar to that of Proposition~\ref{prob: sub_prob_infeas} and thus omitted for brevity. The detailed algorithm is described  in Algorithm~\ref{algorithm: optimal}.

\subsubsection{Complexity Analysis}
The complexity analysis for Algorithm~\ref{algorithm: optimal} is given as below. Node $\bv$ has at most $\cK_{d(\bv)+1}$ child nodes, and thus the entire search tree for Problem~P4 has at most $\sum_{d=1}^D 
\prod_{m=1}^d|\cK_m|$ nodes. Note that $\sum_{d=1}^D|\cK_d|=K$, and hence the number of nodes is upper bounded by $\sum_{d=1}^D(\frac{K}{d})^d$. The computation complexity of visiting any node is $\cO(\max\{K,D\})$. Hence, the complexity of solving Problem~P4 is $\cO(\max\{K,D\}(\frac{K}{D})^D)$, which is repeated for $n=1,\ldots,K$ times in the worst case. Therefore, the complexity of Algorithm~\ref{algorithm: optimal} is given as $\cO(\max\{K,D\}K(\frac{K}{D})^D)$. Due to the NP-completeness of Problem P2, it is anticipated that the complexity is exponential in $D$. However, in practice, $D$ is usually much smaller than $K$ and unlikely to scale up. Therefore, the complexity is given as $\cO(K^{D+2})$, which is polynomial in the number of tasks $K$, even without considering the complexity reduction by pruning.

\section{Experimental Results}
\label{sec: experiments}

\subsection{Experimental Settings} The experimental settings are as follows unless specified otherwise. Following a common approach in the literature, the  arrivals of multiuser task requests at the server are modeled using a Poisson process with the arrival rate $\lambda$ varying from $10$ to more than $100$ tasks per second\cite{LazyBatch,AliSC20}. Each request  is tagged using a random mark specifying the associated user's channel gain and QoS requirements, which are distributed as follows. The marks are independent and identically distributed (i.i.d.) with  
the  channel gain following  Rayleigh fading with average path loss set as $10^{-3}$, the latency requirement $\tau_k$ uniformly distributed between $0.5$ and $2$ seconds, and the resultant  exit point $d_k$ uniformly selected from the set $\{1,2,3\}$. The task requests within a single epoch are assumed to come from different users. The total bandwidth is set as $B = 20$ {MHz}. The transmit  \emph{signal-to-noise ratio} (SNR) is uniform for users and set as $p_k/N_0=20$ {dB}. The size of all feature vectors is fixed as $\ell_k=10$ {KB}. {Following \cite{pmlr-v70-bolukbasi17a}, the inference model is ResNet-50 trained on ImageNet with three candidate exit points, each leading to a classifier.} Specifically, two intermediate classifiers are placed after ``res3a'' and ``res4a'' layers respectively and the final classifier follows the last layer (see Fig.~\ref{fig_diagram_earlyexit}). Therefore, the maximum number of traversed blocks is $D=3$.  The edge server is equipped with a popular edge GPU, NVIDIA JETSON TX2 GPU, of which the computation latency profiles, i.e., $f(n_\mathsf{b})$ and $f_d(n_d)$, follow the measurements  reported in \cite{nvidia2018,pmlr-v70-bolukbasi17a}. The duration of communication and computation slots is set as $T=250$ {ms}. Performance comparison is based on  the metric of task completion rate, defined as the ratio of completed tasks to all generated tasks. 

{Two benchmarking schemes from standard protocols of batched inference are adopted (see e.g.,~\cite{nvidia2018,AliSC20}). One is single-instance inference while the other is featured with the  optimized batch size, as described below.}
\begin{itemize}
    \item \emph{Single-instance inference}: The arrived tasks requests wait in a \emph{first-in-first-out} (FIFO) queue. When idle, the server accepts  the first task from the queue head, and allocates all bandwidth to this task for feature uploading. Upon receiving the feature vector, the server executes single-instance inference without batching. A task is dropped if the waiting time exceeds the allowed  latency. 
    \item \emph{Static batching}: For this scheme, the server has a pre-determined batch size and a pre-determined timeout parameter. Upon the arrival of a task request, the server schedules feature uploading and stores the received feature vector in a buffer. The server assembles all tasks into a batch for parallel inference either the number of buffered tasks reaches that fixed batch size, or the elapsed waiting time exceeds the timeout preset. Due to the prohibitive complexity of parametric optimization by grid search, the batch size and timeout parameter optimized for  $\lambda=50$ tasks per second are used for all arrival rates.
\end{itemize}

\subsection{Resource Allocation without Early Exiting}

\begin{figure}[t]
  \centering
  \includegraphics[width=80mm]{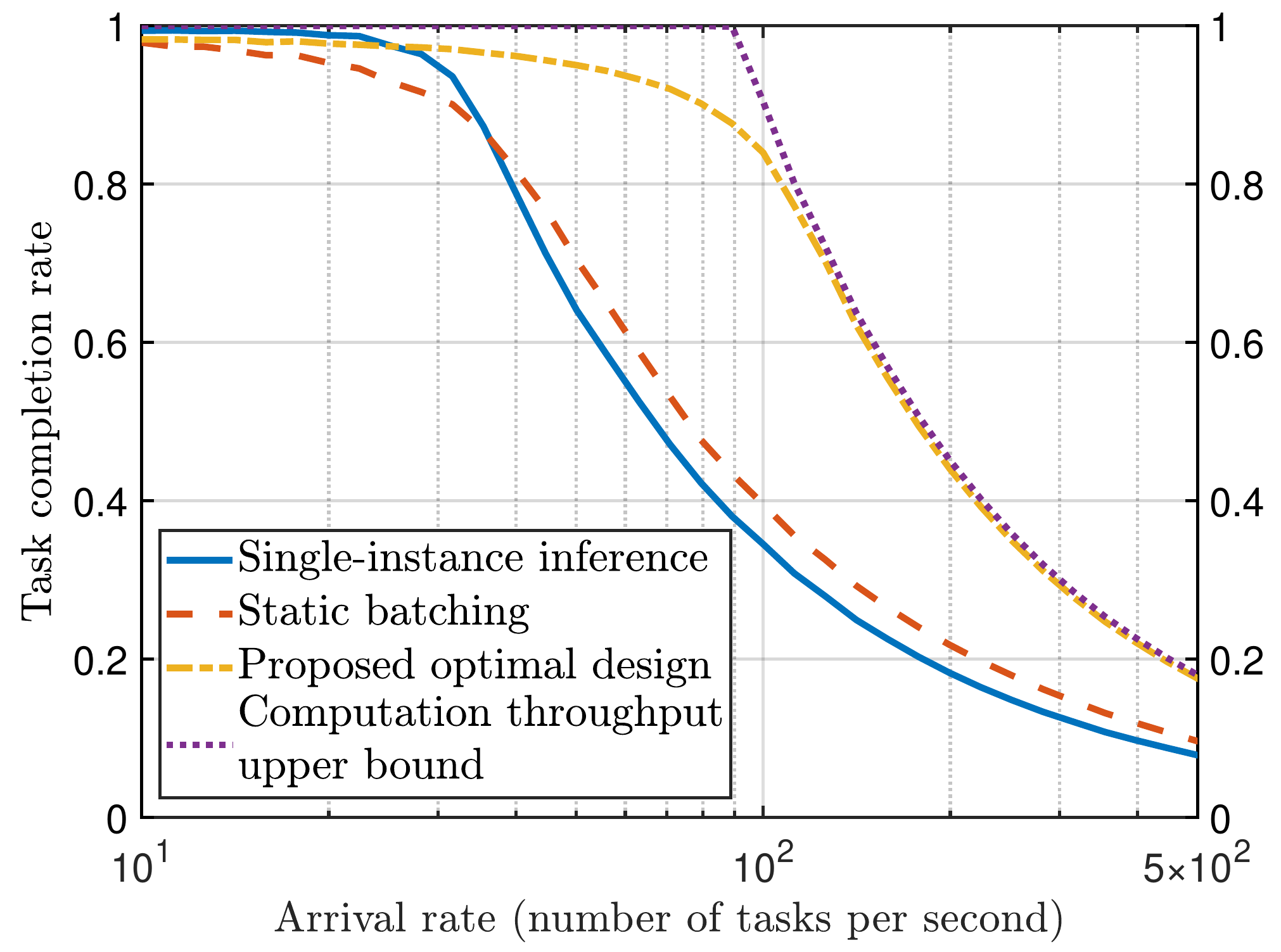}
  \caption{Completion rate versus task arrival rate in full-network inference without early exiting. }\label{fig_non_ee_1}
  \centering
\end{figure}

\begin{figure}[t]
  \centering
  \includegraphics[width=80mm]{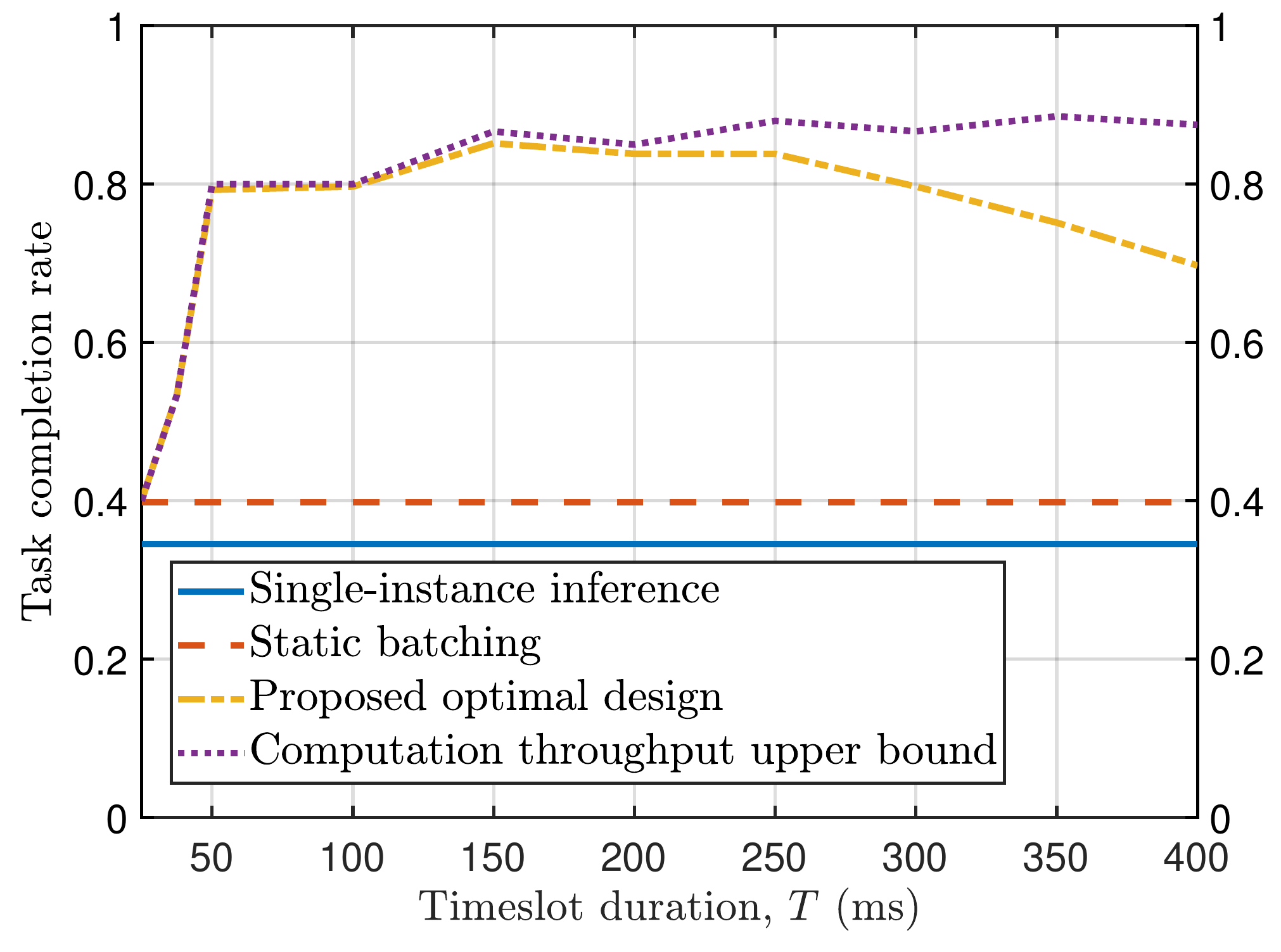}
  \caption{Completion rate versus the time-slot duration, $T$, in full-network inference without early exiting. }
 
  \label{fig_non_ee_timeslot}
  \centering
  
\end{figure}

\begin{figure}[t]
  \centering
  \includegraphics[width=80mm]{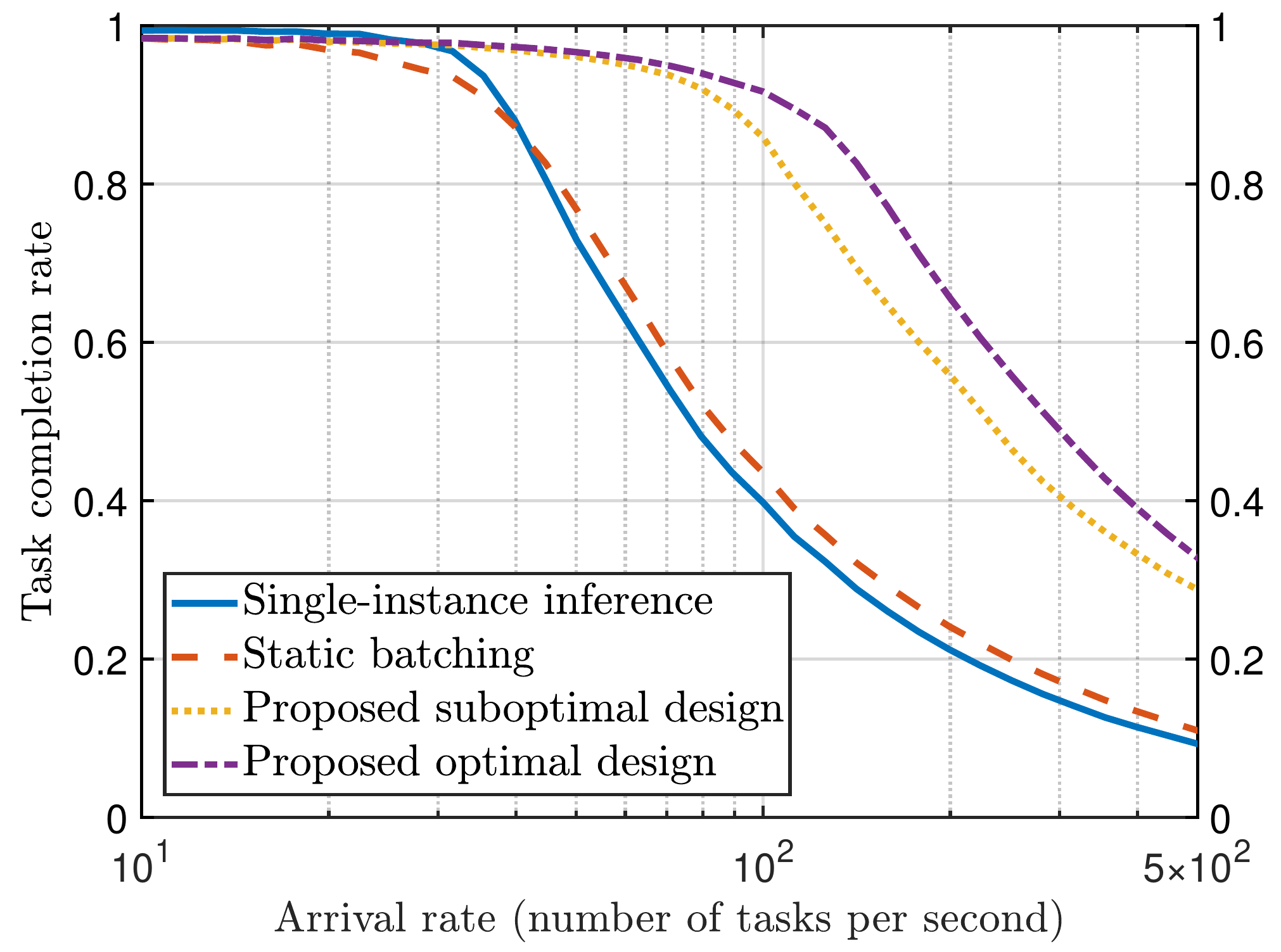}
  \caption{Completion rate versus task arrival rate in the case with early exiting. }\label{fig_ee_1}
  \centering
\end{figure}

In Fig.~\ref{fig_non_ee_1}, the curves of task completion rate versus the task arrival rate are plotted, {where ``proposed optimal design'' refers to Algorithm~\ref{algorithm: opt_p1}.} One can observe that the proposed design can attain large performance gain by exploiting joint communication-and-computation resource allocation for more efficient edge inference. For instance, at  the arrival rate of $100$ task/s, the proposed algorithm  achieves the completion rate of $83.5\%$, a more than $100\%$ improvement over  benchmarking  schemes. However, single-instance inference performs slightly better than the proposed algorithm in the region of sparse task arrival. This results from the proposed time-slotted design incurring waiting time even when the server is idle. {The edge server's maximum achievable throughput is no larger than that in the case with  an arbitrarily large batch size, therefore imposing an upper bound on completion rate (i.e., normalized throughput) as plotted in Fig.~\ref{fig_non_ee_1} with the legend ``computation throughput upper bound''. A key  observation is that the task arrival rate begins to exceed the maximum achievable throughput at around $90$ tasks per second. Before this critical point, the completion rate loss is attributed to the communication bottleneck and queuing timeouts. After this  point, the completion rate approaches its  upper bound while both exhibit a downward trend as the arrival rate grows.}

Fig.~\ref{fig_non_ee_timeslot} shows the completion rate versus the time-slot duration, $T$, for  communication/computation. The arrival rate is set as $\lambda = 100$ task/s. The computation throughput upper bound is derived with a slight difference compared with that in Fig.~\ref{fig_non_ee_timeslot}, as it now reflects the maximum batch size limited by the timeslot duration. One can observe that the slot duration being too short  limits the maximum batch size and thereby introduces a throughput bottleneck to the proposed algorithm. As the slot duration increases and the bottleneck is alleviated, its completion rate can be observed to first grow,  subsequently be saturated, and finally reduce to deviate from its  upper bound. The saturation region corresponds to the case where the computation resource is under-utilized and the bandwidth constraint becomes the main performance-limiting factor. The degradation in the last region is due to the excessive latency caused by long slot duration. 

\subsection{Resource Allocation with Early Exiting}

\begin{figure}[t]
  \centering
  \includegraphics[width=80mm]{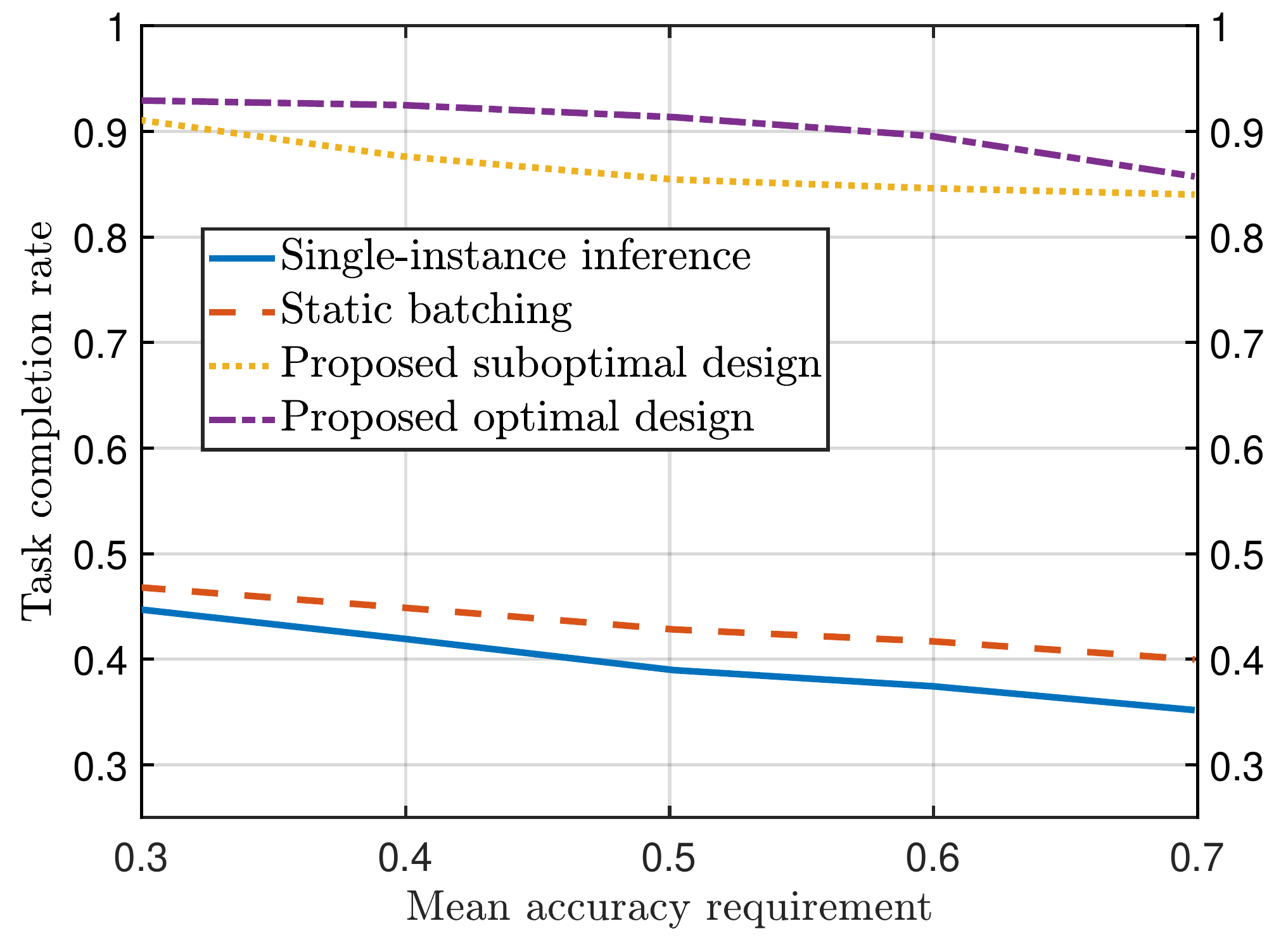}
  \caption{Completion rate versus the mean accuracy requirement of inference requests. }\label{fig_comp_rate_vs_acc}
  \centering
\end{figure}

Fig.~\ref{fig_ee_1} depicts the curves of task completion rate versus task arrival rate in the early exiting case. First, comparing  Figs.~\ref{fig_ee_1} and \ref{fig_non_ee_1}, one can observe that the incorporation of early exits leads to substantial improvements in the completion rate as a result of more efficient utilization of computation resources. As before, the proposed algorithms substantially outperform the  baseline schemes when  the arrival rate is sufficiently large.  For example, at the arrival rate of $100$ task/s, the proposed  sub-optimal and optimal algorithms   improve the completion rate over benchmarking schemes by about  $100\%$. On the other hand, the performance of the sub-optimal algorithm approaches that of the optimal one  at low arrival rates when the computation resources are abundant. As the arrival rate increases, the performance gap between them emerges but is observed to be  approximately constant, which is contributed by the now non-negligible effect of shrinking batch size.

{To study the effects of accuracy requirements on the task completion rate, we vary the mean accuracy requirement of inference requests. The results are presented  in Fig.~\ref{fig_comp_rate_vs_acc}, showing a decrease in completion rates as the mean of required  accuracy  increases. This is intuitive as to achieve   higher accuracy asks for  more computation resources or equivalently  traversing  deeper layers of the neural network. When  the mean accuracy  approaches its maximum achievable by  full-network inference (i.e., an accuracy of 74.9\% on ImageNet), almost all tasks require executing the entire network. On the other hand, when the required mean accuracy is low, it is the limited bandwidth rather than computation power that causes   the throughput bottleneck. In both two cases, the sub-optimality caused by ignoring the shrinking batch size becomes negligible, which explains the small performance gaps between the proposed suboptimal and optimal design towards  two end points of the axis of mean accuracy requirement. Next, we vary the mean latency requirement to study its impact on the completion rate. The latency requirements are uniformly distributed on the interval $[\bar{\tau}-250,\bar{\tau}+250]$ (ms) where $\bar{\tau}$ is the mean latency varying between $250$ ms and $1750$ ms. The control parameters of all proposed and baseline approaches are re-optimized as the latency distribution changes. We can see from Fig.~\ref{fig_comp_rate_vs_lat} that the proposed approaches and static batching benefit from loosening the latency constraints, mainly because larger batch sizes lead to higher    inference throughput. On the contrary, this does not yield any performance gain in the case of single-instance inference since its maximum throughput has already been achieved. Combining Fig.~\ref{fig_comp_rate_vs_acc} and Fig.~\ref{fig_comp_rate_vs_lat} allows  the tradeoff between accuracy and latency requirements,  which is regulated by the  completion rate,  to be quantified. }

\begin{figure}[t]
  \centering
  \includegraphics[width=80mm]{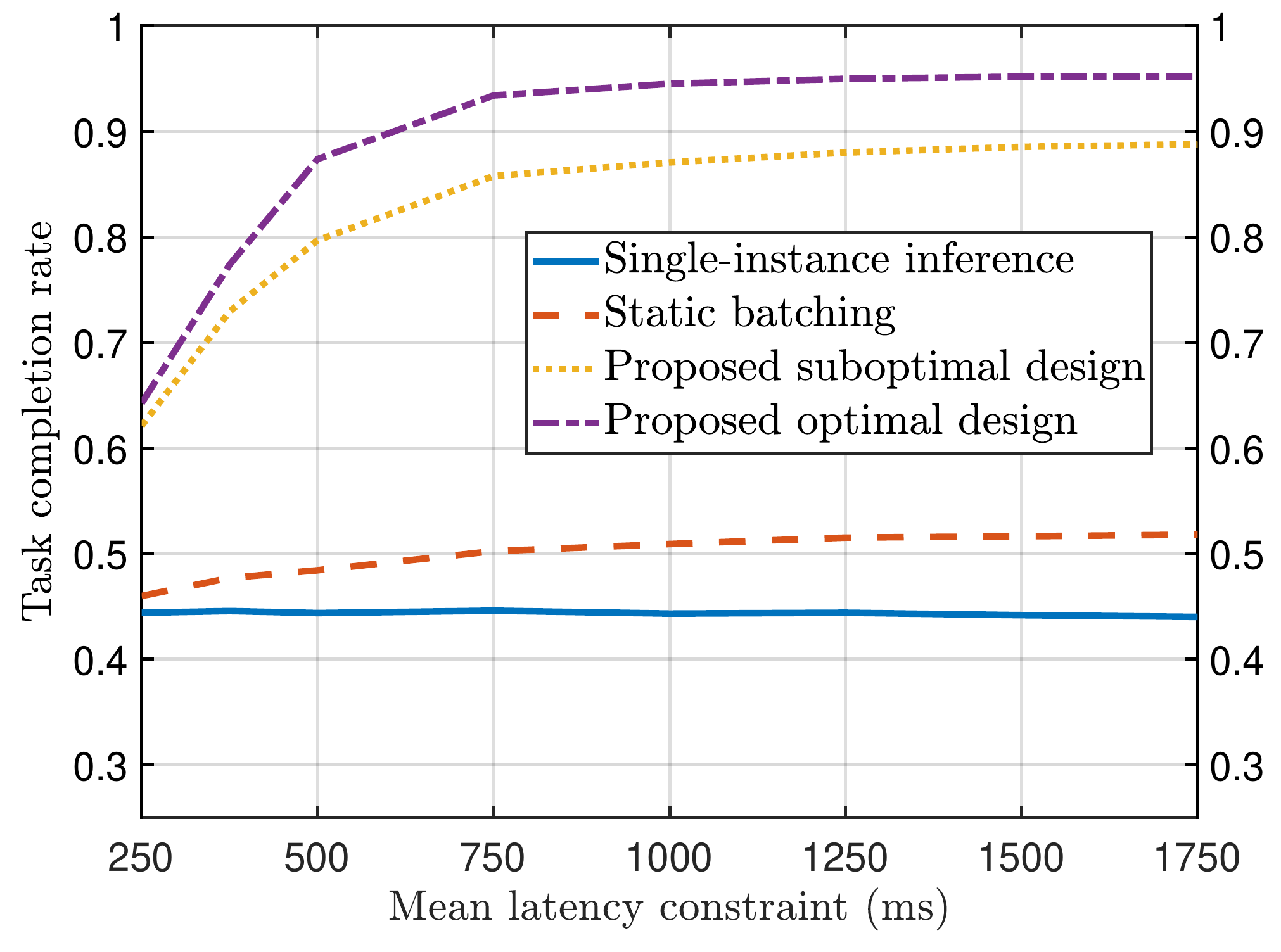}
  \caption{Completion rate versus the mean latency requirement of inference requests. }\label{fig_comp_rate_vs_lat}
  \centering
\end{figure}

\begin{figure}[t]
  \centering
  \includegraphics[width=80mm]{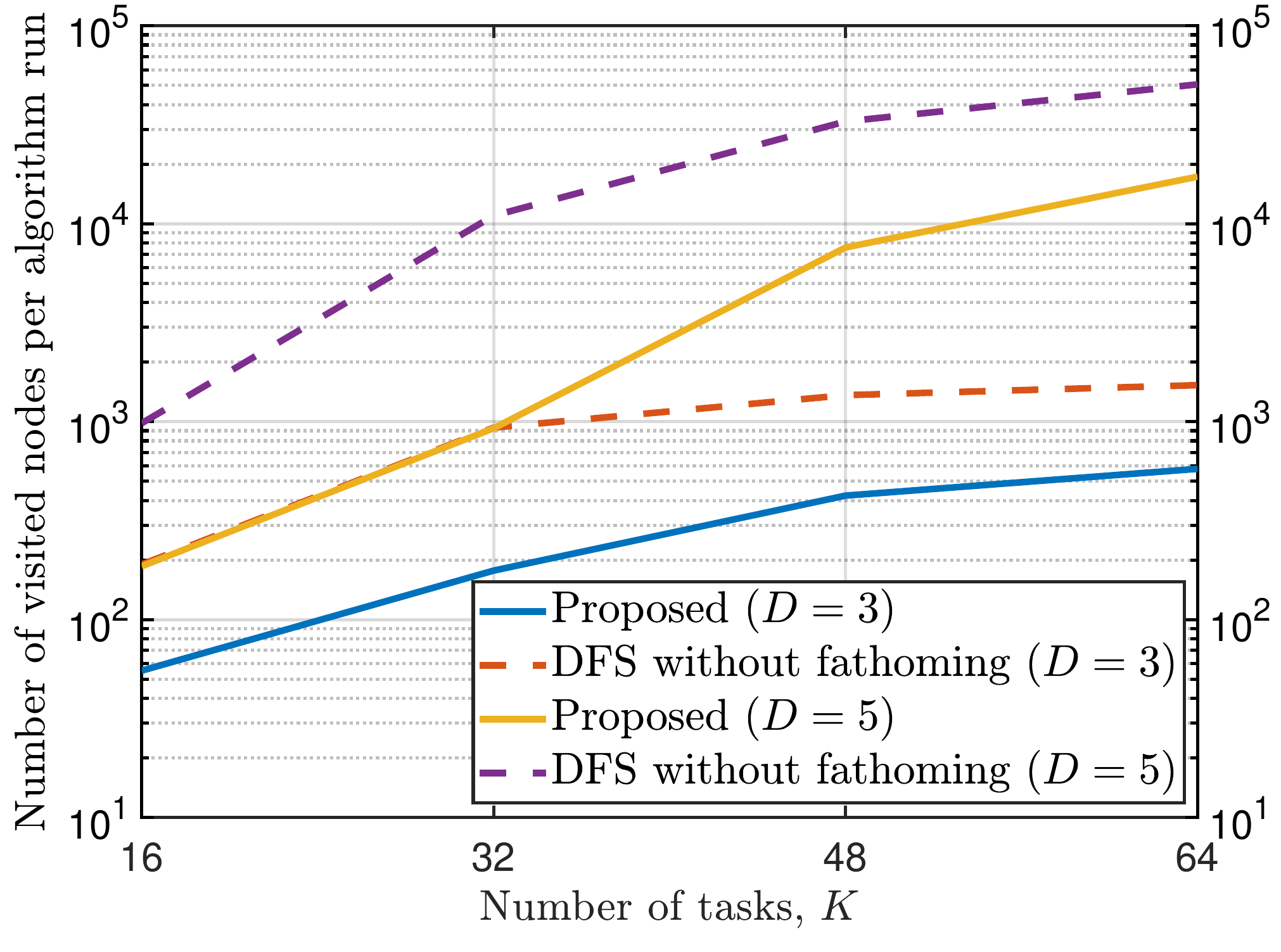}
  \caption{Number of visited nodes per algorithmic run versus the number of tasks to schedule per epoch, $K$. }\label{fig_complexity}
  \centering
\end{figure}
Last, we study the computational complexity of the proposed Algorithm~\ref{algorithm: optimal} for efficient tree-search, which finds the  optimal resource allocation policy with early exiting, as opposed to the benchmarking  design without exploiting node fathoming and pruning. The number of layer blocks in the inference model is $D = \{3, 5\}$. 
In Fig.~\ref{fig_complexity}, the number of visited nodes per algorithmic  run is plotted against the number of tasks to schedule per epoch. It can be seen that the proposed technique of node fathoming dramatically reduces the complexity of tree-search, for example, by about $12$-time for the case of $D =5$ and $K=32$.

\section{Conclusions}
\label{sec: conclusions}
In this paper, we have studied  the joint communication-and-computation resource allocation for multiuser edge inference featuring  batching and early exiting under the criterion of throughput maximization.  In the case without early exiting, it has been found that the optimal policy possesses a threshold-based structure such that among tasks with latency budget above  the  threshold, those with the best channels are prioritized in receiving the inference service.  
In the early exiting case, we advocate finding the optimal policy, which is a combinatorial  problem,  via an efficient tree-search algorithm. To this end, we analyze the special case where the bandwidth is sufficient, and reveal that the optimal policy greedily schedules those users with relatively loose latency and accuracy   requirements and executes their  tasks. These findings are leveraged to design the technique of node fathoming to substantially reduce the tree-search complexity. 

This work opens up the direction of resource allocation for efficient multiuser edge inference with batching and early exiting. Several issues warrant follow-up  investigation including adaptive split inference to support flexible computation offloading,  the generalization to a multi-server system requiring local balancing across servers and the case of heterogeneous task types requiring different models, and the incorporation of more sophisticated physical layer techniques such as \emph{multiple-input-multiple-output} (MIMO) and \emph{non-orthogonal multi-access} (NOMA). 

\appendix
\subsection{Proof of Lemma~\ref{lemma: local_optimality}}
Assume that no solution exists such that $|\cS_1|=n^\dagger\leq|\cF_1|$ but there exists a solution $\{\cS_1, \cS_2, \ldots, \cS_D\}$ to Problem P5 such that 
$0\leq|\cS_1|<n^\dagger$, and let $\{n_d\}$ denote the consequent block-wise batch size. Hence, $\cF_1\setminus\cS_1$ is non-empty. Let $r$ be an arbitrary element of $\cF_1\setminus\cS_1$. Note that $|\cS_1|<n^\dagger\leq|\cF_1|<n$, then there exist at least one non-empty set in $\{\cS_2, \cS_3, \ldots, \cS_D\}$, say $\cS_m\neq\emptyset$. Let $s$ be an arbitrary element of $\cS_m$. Now, consider $\{\cS_1\cup\{r\}, \cS_2, \ldots,\cS_m\setminus\{s\},\ldots, \cS_D\}$ as a new candidate point, and let $\{n'_d\}$ denote its consequent block-wise batch size. From (\ref{equation: n_d}), we know that $n'_d\leq n_d$ holds for $d=1,\ldots,D$ and thus $f(n'_d)\leq f(n_d)$ as $f(n_d)$ is a non-increasing function. It is then easy to verify that $\{\cS_1\cup\{r\}, \cS_2, \ldots,\cS_m\setminus\{s\},\ldots, \cS_D\}$ is also a solution to Problem P5. This process can be repeated until we find a solution such that $|\cS_1|=n^\dagger$, which leads to a contradiction. This completes the proof of Lemma~\ref{lemma: local_optimality}.
\bibliographystyle{IEEEtran}
\bibliography{Edge-Inference}

\end{document}